\newcommand{\Rn}{\mathds{R}^{d}}
\newcommand{\R}{\mathds{R}}
\newcommand{\Lq}[1]{L^{#1}}
\newcommand{\inner}[2]{\langle #1,#2 \rangle}
\newcommand{\vect}[1]{\mathbf{#1}}
\newcommand{\Hnorm}[1]{||#1||_{H_{0}^{1}}}
\newcommand{\subnorm}[2]{||#1||_{#2}}
\newcommand{\expct}{\mathop{\mathbb{E}}}
\newcommand{\Rnd}{(\mathds{R}^{d})^{N}}
\newcommand{\Rmd}{(\mathds{R}^{d})^{N-1}}
\theoremstyle{plain}
\newtheorem{theorem}{Theorem}
\newtheorem{model}{Model}
\theoremstyle{break}
\theoremstyle{nonumberbreak}
\newtheorem{proof}{Proof}
\theoremstyle{nonumberplain}
\theoremstyle{empty}
\begin{document}

\author{Simone G\"ottlich \thanks{Electronic address: \texttt{goettlich@uni-mannheim.de}}}
\author{Steven Knapp \thanks{Electronic address: \texttt{stknapp@mail.uni-mannheim.de}}}
\affil{University of Mannheim, Department of Mathematics, 68159 Mannheim, Germany}
\author{Dylan Weber \thanks{Electronic address: \texttt{djweber3@asu.edu}}}
\affil{Arizona State University,  School of Mathematical and Statistical Sciences, Tempe, AZ 85257-1804, USA}

\date{Dated: \today}

\title{The food seeking behavior of slime mold: a macroscopic approach}

%
%
\maketitle

\begin{abstract}
\noindent
Starting from a particle model we derive a macroscopic aggregation-diffusion equation
for the evolution of slime mold under the assumption of propagation of chaos in the large particle limit.
We analyze properties of the macroscopic model in the stationary case and study the behavior of the slime
mold between food sources. The efficient numerical simulation of the aggregation-diffusion equation
allows for a detailed analysis of the interplay between the different regimes drift, interaction and diffusion.
\end{abstract}

{\bf AMS Classification.} 35Q70, 82C22, 65M06


{\bf Keywords.} interacting particle system, aggregation-diffusion equation, numerical simulations


\tableofcontents

\section{Introduction and Background}
Physarum Polycephalum, or a slime mold, is an amoeboid organism that is notable for its ability to perform complex tasks despite its relatively simple biological strucutre and lack of a brain.    Surprisingly, slime molds are known to be able to solve mazes, construct robust networks, and solve shortest path and spanning tree problems \cite{bonifaci_physarum_2012, nakagaki_t_obtaining_2004, adamatzky_slime_2012, adamatzky_reaction-diffusion_2009, latty_speedaccuracy_2011, baumgarten_plasmodial_2010,adamatzky_developing_2009, nakagaki_smart_2001,oettmeier_physarum_2017}.  The simplicity of the slime mold organism and the complexity of its emergent behaviors have inspired research into Physarum Polycephalum as a model for collective behavior \cite{jones_approximating_2009, jones_influences_2011, tero_physarum_2006, jones_characteristics_2010 }.  A hallmark of research in collective dynamics is complex global structure emerging as a result of relatively simple local interactions between agents.  Indeed, cellular automata models inspired by physarum polycephalum food seeking behavior have been found to be good models of other collective phenomena such as the formation of transportation networks in a country or the "cosmic web" of stellar material between galaxies \cite{burchett_revealing_2020, adamatzky_brazilian_2011}.

When seeking food, physarum polycephalum moves to concentrate most of its mass on food sources - however it maintains a network of mass between food sources.  In the absence of food it begins to retract to a central mass.  Past research into models of physarum polycephalum mainly attempt to model this food-seeking behavior and have taken several different directions.  Many are discrete cellular automata models - the slime mold is represented by a collection of agents that evolve discretely in time and space.  Different interaction rules and sometimes even different "types" of agents are employed to model different slime mold behaviors however almost all models include a rule that causes slime mold agents to move towards food sources \cite{liu_physarum_2013, wu_new_2015, liu_new_2017, tsompanas_modeling_2012, gunji_minimal_2008}.  Other approaches note that  physarum polycephalum moves and forms networks by moving fluid through tubules within its mass - much like an electrical network \cite{tero_physarum_2006, tero_flow-network_2008, tero_mathematical_2007}.  These approaches focus on modeling physarum polycephalum as a network of ordinary differential equations and attempt to model the network that physarum polycephalum builds between food sources once they are found.  In particular they do not attempt to model the food seeking behavior of the organism.

In this manuscript we propose a model hierarchy aimed at modeling the food seeking behavior of Physarum Polycephalum.  The main behaviors we attempt to model are aggregation of mass on food sources while maintaining connected paths between food sources and retraction in the absence of food sources.  We first define an agent based model that is continuous in time and space through a system of stochastic differential equations.  The evolution of each agent is governed by three terms - an interaction term with other agents that causes agents to aggregate, a drift term that causes agents to move towards food sources and a noise term intended to model slime mold foraging behavior in the absence of food sources.  We then rigorously show under the assumption of propagation of chaos that as the number of agents approaches infinity, the agent based model converges in a sense to a deterministic density based model in the form of an aggregation-diffusion equation.  This macroscopic model includes terms that correspond to the drift, interaction and noise terms present in the microscopic model.  We then embark on an analysis of the macroscopic equation from the point of view of modeling Physarum food seeking behavior.  We first discuss a framework for choosing an appropriate interaction kernel through an analysis of the steady states of the macroscopic model in the case that there are no food sources.  Then, we rigorously show that there is a parameter regime in which the only possible steady state of the macroscopic model is zero and thus this regime is not suitable for modeling Physarum food seeking behavior.  Finally, we simulate the macroscopic model using the \textit{blob method for aggregation-diffusion equations} presented in \cite{carrillo_blob_2019} and find that it is necessary to scale the three main terms of the model in order to reproduce aggregation on food sources.  We then simulate the evolution of the model in different scaling regimes and with different choices of interaction kernel and qualitatively examine which regimes result in the main behaviors we attempt to model.  We find that there are several regimes that result in the qualitative desired behaviors.

\section{Model Derivation}
The main features of slime mold food seeking behavior are aggregation of mass on food sources while maintaining a connected mass and retraction in the absence of food sources.  To model these behaviors we first introduce an agent-based model. Here, the slime mold is represented by a swarm of $N$ agents who's trajectories, $\vect{X}_{i}(t) \in \Rn$, are modeled as a system of stochastic differential equations.  In the following we will refer to this model as the \textit{microscopic model} in order to reinforce the intuition that the agents represent slime mold "particles".
\begin{model}[Microscopic Model]
Consider a collection of $N$ agents with (random) positions \\ $(\vect{X}_{1}(t),...,\vect{X}_{n}(t))\in \Rnd$.  The \textbf{microscopic model} is given by the evolution:
\begin{equation}
  \begin{aligned}
  \label{eq:microscopic_model}
  d\vect{X}_{i}(t) = A\nabla V(\vect{X}_{i}(t))dt + B \frac{1}{N -1}\sum_{j\neq i}^{N}\nabla W(\vect{X}_{j}(t) - \vect{X}_{i}(t))dt + C d\vect{B}_{i}(t)
\end{aligned}
\end{equation}
where $\set{\vect{B}_{i}}_{i}$ are independent, standard $\Rn$-valued Brownian motions, $V,W:\Rn\rightarrow\R$ are $C^1$ functions,
$\set{\vect{X}_{i}(0)}_{i}$ are independent and identically distributed random vairables independent of $\set{\vect{B}_{i}}$ with probability density function given by  $\rho_{0}:\Rn\rightarrow \R$, and $A,B,C>0$.
\end{model}
We note that if we assume that $\nabla V$ and $\nabla W$ are globally Lipschitz and further if there exists $K_{1}, K_{2} >0$ such that for $x \in \Rn$:
\begin{equation*}
  |\nabla V(x)| \leq K_{1}(1 + |x|)\quad\text{and} |\nabla W(x)| \leq K_{2}(1 + |x|),
\end{equation*}
then it can be shown that \eqref{eq:microscopic_model} has a unique strong solution.

Intuitively, $V$ represents the density of chemo-attractants given off by food sources.  For this reason, we usually think of $V$ as a sum of radially symmetric, positive functions. Each term represents the attractants emanating from a single food source.  A prototypical example would be a sum of Gaussians centered at different positions.  $W$ represents the interaction between agents.  Notice that, roughly, if $\nabla W(\vect{X}_{j}(t) - \vect{X}_{i}(t)) \approx \vect{X}_{j}(t) - \vect{X}_{i}(t)$, then the interaction causes agents to \textit{attract} each other. lLikewise if $\nabla W(\vect{X}_{j}(t) - \vect{X}_{i}(t)) \approx \vect{X}_{i}(t) - \vect{X}_{j}(t)$, then the agents will \textit{repel} each other.  As we have noted, Physarum tends to maintain a connected mass, even when aggregating around disparate food sources.  Therefore, this interaction term should cause agents to remain locally close, i.e. agents should not repel each other at long ranges.  The noise term, $Cd\vect{B}_{i}(t)$, models physarum foraging behavior.  In the absence of chemo-attractants, Physarum forages by diffusing outward from its initial position while still maintaining a connected mass.  Here, in the absence of a food source ($V = 0$) the noise term should cause particles to spread from their initial positions while the interaction term causes them to still have some propensity to remain locally "together".

  We include the parameters $A,B$ and $C$ for the case that simulations of \eqref{eq:microscopic_model} demonstrate the need to scale the terms' influence on each particles trajectory to produce evolutions that resemble Physarum.  However, any simulation of \eqref{eq:microscopic_model} that would give an accurate picture of how \eqref{eq:microscopic_model} models Physarum would have to use a very large number of particles; this becomes computationally intractable.  Therefore, we must find a way to examine the behavior or \eqref{eq:microscopic_model}, for a large number of particles, without computing the trajectory of every particle explicitly.  We will show that in the large particle limit that the evolution of the marginal distribution of any particle is given by an \textit{aggregation-diffusion equation}.  There exist computationally efficient methods for simulating such equations, therefore we can examine the behavior of \eqref{eq:microscopic_model} by simulating the evolution of the marginal density.  The technique we will employ to derive an appropriate model is the assumption of \textit{propagation of chaos}; i.e. if the number of agents agents is large, the trajectories of any two agents can be assumed to be independent.  This is true in many cases, see for example \cite{chen_modeling_2018, chen_mean_2017}.

\begin{theorem}\label{thm:convergence}
In the $N\rightarrow \infty$ limit and for a given, $A,B,C > 0$, $W,V \in C^1(\Rn)$ and $\rho_{0}$ the evolution of the marginal distribution of any agent evolving according to \eqref{eq:microscopic_model} is a solution of the aggregation-diffusion equation
\begin{align*}
  \partial_{t}\rho(x, t)&= A\partial_{x}\Big[\nabla{V}(x)\rho(x, t)\Big] + B\partial_{x}\Big[ \rho(x,t)\nabla W * \rho(x,t)\Big] + C\Delta \rho(x, t),\\
  \rho(x, 0) &= \rho_{0}(x)\;,\;x\in\Rn
\end{align*}
under the assumption of propagation of chaos.
\end{theorem}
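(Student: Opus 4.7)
The plan is to derive the PDE in weak form by applying It\^o's formula to a smooth compactly supported test function, taking expectations, and then passing to the $N\to\infty$ limit using the propagation of chaos hypothesis. Let $\phi \in C_c^{\infty}(\Rn)$. By the exchangeability of the particles, which follows from the i.i.d.\ initial data and independent Brownian motions, the marginal law of $\vect{X}_i(t)$ is independent of $i$; denote its density by $\rho^N(\cdot,t)$.

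Applying It\^o's formula to $\phi(\vect{X}_i(t))$ along \eqref{eq:microscopic_model} gives
\begin{equation*}
d\phi(\vect{X}_i(t)) = \Bigl[ A\,\nabla V(\vect{X}_i)\cdot\nabla\phi(\vect{X}_i) + \tfrac{B}{N-1}\sum_{j\neq i}\nabla W(\vect{X}_j - \vect{X}_i)\cdot\nabla\phi(\vect{X}_i) + C\,\Delta\phi(\vect{X}_i) \Bigr]\,dt + dM_t,
\end{equation*}
where $M_t$ is a square-integrable martingale coming from the stochastic integral. Taking expectations kills $M_t$, and exchangeability collapses the $N-1$ cross terms into a single representative:
\begin{equation*}
\frac{d}{dt}\expct[\phi(\vect{X}_i(t))] = A\,\expct\bigl[\nabla V(\vect{X}_i)\cdot\nabla\phi(\vect{X}_i)\bigr] + B\,\expct\bigl[\nabla W(\vect{X}_j - \vect{X}_i)\cdot\nabla\phi(\vect{X}_i)\bigr] + C\,\expct[\Delta\phi(\vect{X}_i)],
\end{equation*}
for any fixed $j\neq i$. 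The propagation of chaos hypothesis now supplies that in the $N\to\infty$ limit the two-particle marginal of $(\vect{X}_i(t),\vect{X}_j(t))$ factorizes as $\rho(\cdot,t)\otimes\rho(\cdot,t)$, where $\rho$ is the limit of $\rho^N$. Passing to the limit term-by-term and using Fubini yields
\begin{equation*}
\frac{d}{dt}\int_{\Rn}\phi\,\rho\,dx = A\int_{\Rn}\nabla V\cdot\nabla\phi\,\rho\,dx + B\int_{\Rn}\bigl(\nabla W * \rho\bigr)\cdot\nabla\phi\,\rho\,dx + C\int_{\Rn}\Delta\phi\,\rho\,dx,
\end{equation*}
which, after integrating by parts and varying $\phi$ over a dense class, is the weak form of the announced equation; under the stated $C^1$ regularity of $V$ and $W$ this is then promoted to the strong form.

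The main obstacle will be rigorously justifying the limit passage in the interaction expectation: weak convergence of the two-particle marginals does not by itself imply convergence of $\expct[\nabla W(\vect{X}_j - \vect{X}_i)\cdot\nabla\phi(\vect{X}_i)]$, so one needs uniform integrability. This is delivered by the linear-growth hypothesis on $\nabla W$ together with the standard uniform-in-$N$ second-moment bound $\sup_N \expct[|\vect{X}_i(t)|^2]<\infty$, which follows from the Lipschitz assumption on the coefficients of \eqref{eq:microscopic_model} via Gr\"onwall's inequality. The drift and diffusion terms are handled analogously but without the factorization step, so once the interaction term is controlled the remainder of the derivation is essentially that of the McKean--Vlasov Fokker--Planck equation associated with the limiting nonlinear SDE $d\bar{\vect{X}}(t) = A\nabla V(\bar{\vect{X}})\,dt + B\,(\nabla W *\rho(\cdot,t))(\bar{\vect{X}})\,dt + C\,d\vect{B}(t)$.
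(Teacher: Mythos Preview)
Your argument is correct and arrives at the same weak-form PDE as the paper, but by a shorter route. The paper first applies It\^o's lemma to a test function $F\in C_c^\infty((\Rn)^N)$ on the full configuration space to derive the $N$-particle Fokker--Planck equation for the joint density $\rho^N$, and only then integrates out $N-1$ coordinates under the chaos hypothesis to obtain the one-particle equation. You bypass this intermediate step by applying It\^o directly to $\phi(\vect{X}_i)$ for $\phi\in C_c^\infty(\Rn)$ and invoking exchangeability to collapse the interaction sum to a single two-particle expectation before taking the mean-field limit. Your route avoids writing down the $N$-body PDE and makes explicit that the chaos assumption is needed only at the level of the two-particle marginal; the paper's route has the modest advantage of displaying the full Liouville-type hierarchy before closure. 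Your closing remarks on uniform integrability and the uniform-in-$N$ second-moment bound go beyond the paper, which treats the entire derivation as formal under the stated assumption and does not justify the limit passage.
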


\begin{proof}
Consider a finite group of $N$ agents, $(X_{i}(t), t\geq 0){i}^{N}\subseteq \Rn, i=1,...,N$ evolving according to \eqref{eq:microscopic_model}.  Let $\mu^{N}$ be the empirical distribution of the configuration of agents.  That is:
\begin{align*}
  \mu^{N}(t) = \frac{1}{N}\sum_{i}\delta_{X_{i}(t)}
\end{align*}
where $\delta_{X_{i}(t)}$ is the Dirac measure with unit mass at $X_{i}(t)$.  Let $F$ be a test function in $C_{c}^{\infty}(\Rnd)$, therefore by definition of the Dirac measure we deduce:
\begin{align*}
   \inner{\rho^{N}}{F} = \frac{1}{N}\sum_{i}F(X_{i}(t))
\end{align*}
we first derive an expression for the time evolution of the expectation of this quantity. Let $\vect{X}(t) = (X_{1}(t),...,X_{N}(t))$.  By conditioning we find
\begin{equation}
  \label{eq:condition}
  \begin{aligned}
  \frac{\expct[F(\vect{X}(t + \Delta t)] - \expct[F(\vect{X}(t))]}{\Delta t} = \int_{\Rnd}\frac{\expct[F(\vect{X}(t + \Delta t)| \vect{X(t)}= \vect{x}] - F(\vect{x})}{\Delta t}\rho^{N}(\vect{x}, t)d\vect{x}
  \end{aligned}
\end{equation}
where $\rho^{N}(\vect{x}, t)$ is the joint distribution of the collection of agents.  Using Ito's Lemma (see e.g. \cite{oksendal_stochastic_2003}) we can further expand the conditioned term in the above
\begin{equation}
  \label{eq:ito}
\begin{aligned}
  \expct\Big[F(\vect{X}(t + \Delta t))| \vect{X(t)} = \vect{x}\Big] &= F(\vect{x})\\
                                                           &+\expct\Big[\sum_{j = 1}^{N}\int_{t}^{t+\Delta t}\partial_{x_{j}}[F(\vect{X}(t))]dX_{j}(s)|\vect{X}(t) = \vect{x}\Big]\\
                                                           &+ \expct\Big[\frac{1}{2}\sum_{j,k}^{N}\int_{t}^{t+\Delta t}\partial_{x_{j}}\partial_{x_{k}}[F(\vect{X}(t))]d[X_{i}(s),X_{j}(s)]|\vect{X}(t) = \vect{x}\Big].
\end{aligned}
\end{equation}
We first focus on the third term of \eqref{eq:ito}.  Notice that as each $X_{i}$ evolves according to \eqref{eq:microscopic_model}, since the noise terms for disparate agents are independent, we have that:
\begin{align*}
  d[X_{i}(s), X_{j}(s)] = \begin{cases}
                              0\quad&\text{if}\quad i\neq j \\
                              Cds\quad&\text{if}\quad i = j \\
                          \end{cases}.
\end{align*}
Therefore, the third term of \eqref{eq:ito} simplifies as:
\begin{equation}
  \label{eq:diffuse}
  \expct\Big[\frac{1}{2}\sum_{j,k}^{N}\int_{t}^{t+\Delta t}\partial_{x_{j}}\partial_{x_{k}}[F(\vect{X}(t))]d[X_{i}(s),X_{j}(s)]|\vect{X}(t) = \vect{x}\Big] = \expct\Big[ C\int_{t}^{t + \Delta t} \Delta F(\vect{X}(s))ds|\vect{X}(t) = \vect{x}\Big].
\end{equation}
Now, turning to the second term of \eqref{eq:ito} we find by appyling \eqref{eq:microscopic_model} that:
\begin{equation}
  \label{eq:advect}
  \begin{aligned}
  \expct\Big[\sum_{j = 1}^{N}\int_{t}^{t+\Delta t}\partial_{x_{j}}[&F(\vect{X}(t))]dX_{j}(s)|\vect{X}(t) = \vect{x} \Big] = \expct\Big[\sum_{j=1}^{N}\int_{t}^{t+\Delta t}\partial_{x_{j}}[F(\vect{X}(s))]\Big(A\nabla V(X_{j}(s))\Big)ds|\vect{X}(t) = \vect{x}\Big]\\
  &+\expct\Big[\sum_{j=1}^{N}\int_{t}^{t+\Delta t}\partial_{x_{j}}[F(\vect{X}(s))]\Big(\frac{B}{N-1}\sum_{k\neq j}\nabla W(X_{j}(s) - X_{k}(s))\Big)ds|\vect{X}(t) = \vect{x}\Big]\\
  &+\expct\Big[\sum_{j}\int_{t}^{t+\Delta t}\partial_{x_{j}}[F(\vect{X}(s))]\Big(CdB_{j}(s)\Big)|\vect{X}(t) = \vect{x}\Big].\\
\end{aligned}
\end{equation}
As the stochastic integral is a martingale, we have that the last term of \eqref{eq:advect} is equal to $0$.  Therefore, combining \eqref{eq:diffuse} and \eqref{eq:advect}, we have that:
\begin{equation}
  \label{eq:dt_rho}
  \begin{aligned}
      \expct\Big[F(\vect{X}(t + \Delta t))| \vect{X(t)} &= \vect{x}\Big] = F(\vect{x}) \\
        &+ \expct\Big[\sum_{j=1}^{N}\int_{t}^{t+\Delta t}\partial_{x_{j}}[F(\vect{X}(s))]\Big(A\nabla V(X_{j}(s))\Big)ds|\vect{X}(t) = \vect{x}\Big] \\
        &+\expct\Big[\sum_{j=1}^{N}\int_{t}^{t+\Delta t}\partial_{x_{j}}[F(\vect{X}(s))]\Big(\frac{B}{N-1}\sum_{k\neq j}\nabla W(X_{j}(s) - X_{k}(s))\Big)ds|\vect{X}(t) = \vect{x}\Big]\\
        &+\expct\Big[ C\int_{t}^{t + \Delta t} \Delta F(\vect{X}(s))ds|\vect{X}(t) = \vect{x}\Big].
  \end{aligned}
\end{equation}
Therefore, by \eqref{eq:condition} and the continuity of the trajectories, \eqref{eq:dt_rho} implies $P$ almost-surely that:
\begin{equation}
  \begin{aligned}
  \label{eq:rho_N_weak}
  \partial_{t}\int_{\Rnd}F(\vect{x})\rho^{N}(\vect{x}, t)d\vect{x} &= \int_{\Rnd} C\Delta F(\vect{x})\rho^{N}(\vect{x}, t)d\vect{x} \\
                                      &+ \int_{\Rnd}\sum_{j=1}^{N}\partial_{x_{j}}[F(\vect{x})] \Big(A\nabla V(X_{j}(t))\Big)\rho^{N}(\vect{x},t)d\vect{x}\\
                                      &+ \int_{\Rnd}\sum_{j=1}^{N}\partial_{x_{j}}[F(\vect{x})]\Big(\frac{B}{N-1}\sum_{k\neq j}\nabla W(X_{j}(t) - X_{k}(t))\Big)\rho^{N}(\vect{x}, t).
  \end{aligned}
\end{equation}
Therefore, $\rho^{N}$ is a weak solution to the following initial value problem:
\begin{equation}
  \label{eq:rho_N_time_evolve}
  \begin{aligned}
    \partial_{t} \rho^{N}(\vect{x}, t) &- C\Delta\rho^{N}(\vect{x}, t) -\sum_{j=1}^{N}\partial_{x_{j}}\Big[A\nabla V(x_{j})\rho^{N}(\vect{x}, t)\Big] \\
                                       &- \sum_{j=1}^{N}\partial_{x_{j}}\Big[\rho^{N}(\vect{x}, t)\frac{B}{N-1}\sum_{k\neq j}\nabla W (x_{j} - x_{k}(t))\Big] = 0, \\
                                    &\rho^{N}(\vect{x}, 0) = \prod_{j = 1}^{N}\rho_{0}(x_{j}).
  \end{aligned}
\end{equation}
Using \eqref{eq:rho_N_time_evolve}, we now compute the time evolution of the marginal distribution of one agent in the $N\rightarrow \infty$ limit under the assumption of \textit{propagation of chaos}, that is for any finite collection of $M$ agents we have that their joint distribution satisfies:
\begin{align}
  \label{eq:prop_chaos}
  \rho^{M}(x_{1},...,x_{N},t) = \prod_{i =1}^{N}\rho(x_{i}, t).
\end{align}
Without loss of generality we will consider the marginal distribution of the first agent which is given by:
\begin{align*}
  \rho(x_{1}, t) = \int_{\Rmd}\rho^{N}(x_{1},...,x_{N})d(x_{2},...,x_{N})
\end{align*}
and therefore, by \eqref{eq:rho_N_time_evolve}, we have that $\rho(x_{1}, t)$ satisfies (weakly):
\begin{equation}
  \label{eq:marginal_evolution}
  \begin{aligned}
    \partial_{t}\rho(x_{1}, t) &= \sum_{j = 1}^{N}\int_{\Rmd} \partial_{x_{j}}\Big[A\nabla V(x_{j})\rho^{N}(\vect{x}, t)\Big]d(x_{2},...,x_{N})\\
                                 &+ \sum_{j= 1}^{N}\int_{\Rmd}\partial_{x_{j}}\Big[\frac{B}{N-1}\sum_{k\neq j}\nabla W(x_{j} - x_{k})\rho^{N}(\vect{x}, t)\Big]d(x_{2},...,x_{N})\\
                                 &+ \int_{\Rmd}C\Delta \rho^{N}(\vect{x}, t)d(x_{2},...,x_{N}).\\
  \end{aligned}
\end{equation}
We analyze \eqref{eq:marginal_evolution} term by term.  First, by applying \eqref{eq:prop_chaos} we find that
\begin{equation}
  \label{eq:diffuse_term}
  \int_{\Rmd}C\Delta \rho^{N}(\vect{x}, t)d(x_{2},...,x_{N}) = C\Delta \rho(x_{1}, t).
\end{equation}
Next, if we again apply \eqref{eq:prop_chaos} and integrate by parts in each term of the sum where $j\neq1$ we find that
\begin{equation}
  \label{eq:drift_term}
  \sum_{j=1}^{N}\int_{\Rmd} \partial_{x_{j}}\Big[A\nabla V(X_{j}(t))\rho^{N}(\vect{x}, t)\Big]d(x_{2},...,x_{N}) = \partial_{x_{1}}\Big[A\nabla{V}(x_{1})\rho(x_{1}, t)\Big].
\end{equation}
Similarly we deduce
\begin{equation}
  \label{eq:interaction_term_intermediate}
  \begin{aligned}
  \sum_{j=1}^{N}&\int_{\Rmd}\partial_{x_{j}}\Big[\frac{B}{N-1}\sum_{k\neq j}^{N}\nabla W(x_{j} - x_{k})\rho^{N}(\vect{x}, t)\Big]d(x_{2},...,x_{N}) = \\
  &                              \int_{\Rmd}\partial_{x_{1}}\Big[\frac{B}{N-1}\sum_{k=2}^{N}\nabla W(x_{1} - x_{k})\rho^{N}(\vect{x}, t)\Big]d(x_{2},...,x_{N}).
  \end{aligned}
\end{equation}
We can simplify \eqref{eq:interaction_term_intermediate} further by integrating under the divergence operator and applying \eqref{eq:prop_chaos}, notice that:
\begin{equation}
  \begin{aligned}
    &\int_{\Rmd}\sum_{k=2}^{N}\nabla W(x_{1} - x_{k})\rho^{N}(\vect{x}, t)d(x_{2},...,x_{N}) \\
    &=\sum_{k=2}^{N}\int_{\Rmd}\nabla W(x_{1} - x_{k})\Big(\rho(x_{1},t)\rho(x_{2},t),...,\rho(x_{N},t)\Big)d(x_{2},...,x_{N})\\
    &= \sum_{k=2}^{N}\rho(x_{1},t)\int_{\Rn}\nabla W(x_{1} - x_{k})\rho(x_{k},t)dx_{k}\\
    &=(N-1)\rho(x_{1},t)\nabla W *\rho(x_{1}).
  \end{aligned}
\end{equation}
Therefore:
\begin{equation}
  \label{eq:interaction_term}
  \begin{aligned}
    \int_{\Rmd}\partial_{x_{1}}\Big[\frac{B}{N-1}\sum_{k=2}^{N}\nabla W(x_{1} - x_{k})\rho^{N}(\vect{x}, t)\Big]d(x_{2},...,x_{N}) = \partial_{x_{1}}\Big[B \rho(x_{1},t)\nabla W * \rho(x_{1},t)\Big].
  \end{aligned}
\end{equation}
Therefore, combining \eqref{eq:diffuse_term}, \eqref{eq:drift_term} and \eqref{eq:interaction_term} we have by \eqref{eq:marginal_evolution} that the marginal of any agent satisfies:
\begin{equation}
  \label{eq:macro_model}
  \begin{aligned}
  \partial_{t}\rho(x, t)&= A\partial_{x}\Big[\nabla{V}(x)\rho(x, t)\Big] + B\partial_{x}\Big[ \rho(x,t)\nabla W * \rho(x,t)\Big] + C\Delta \rho(x, t)\\
  \rho(x, 0) &= \rho_{0}(x)
  \end{aligned}
\end{equation}
as desired.
\end{proof}

As the equation derived in Theorem \ref{thm:convergence} describes how the entire collection of particles evolves from a density standpoint, in the following we will refer to it as the \textit{macroscopic model}.

\begin{model}[Macroscopic model]
  The \textbf{macroscopic model} is given by the evolution of the aggregation-diffusion equation
  \begin{equation}
  \begin{aligned}\label{eq:macroscopic_model}
    \rho_{t}(t,x) &= A(\nabla \cdot (\nabla V\rho))  + B(\nabla \cdot((\nabla W * \rho)\rho)) + C(\Delta\rho)\quad x\in \Omega \subseteq \Rn, \\
    \rho(0, x) &= \rho_{0}
  \end{aligned}.
\end{equation}
  for $W,V \in C^1(\Rn)$ and $\rho_{0}\in H^1(\Rn)$ with $\int_{\Omega} \rho_{0}(x)dx=1.$
\end{model}

Here, $V$ and $W$ continue to represent the chemoattractant of food sources and the slime mold's propensity to aggregate respectively.  In the following we will study \eqref{eq:macroscopic_model} from the viewpoint of using it to model Physarum food-seeking behavior.


\section{Model Properties}
Many questions remain about how to utilize the macroscopic equation derived in the previous section to model a slime mold.  In this section we present some analysis of stationary states of the equation in order to inform some of these modeling choices.  First, we investigate stationary states in the case where there is no food present as a way to gain some information about what choice of kernel function might be reasonable.  Inspired by \cite{nakagaki_obtaining_2004}, we believe that Gaussians represent a reasonable model for stationary states of a slime mold - however we find that Gaussian stationary states are only possible in the case that the kernel function, $W$, is quadratic.  Next, we investigate stationary states of the scaled equation.  We find through a fixed point argument that if the diffusion scaling parameter, $C$, is sufficiently higher than $A$ and $B$ that the only stationary state of the equation is $0$ (even in the presence of a food source).  This suggests that when tuning the scaling parameters that there cannot be "too much" diffusion.

\subsection{Zero-food stationary states}
In the case that there are no food sources a slime mold will "retract" to a more compact configuration \cite{nakagaki_t_obtaining_2004, nakagaki_maze-solving_2000, nakagaki_path_2001}.  Often, this configuration is roughly radially symmetric.  Given that, a reasonable model for the configuration of the slime mold in the case that there is no food could be a Gaussian.  We now investigate if there are conditions on the kernel function that are imposed by the assumption that stationary states are Gaussian.  For simplicity we work in one dimension however the calculations are analagous in higher dimensions.  In the case that there are no food sources \eqref{eq:macroscopic_model} becomes:
\begin{align}\label{eq:zero_food}
\rho_{t} = A\Delta \rho + B\nabla \cdot ((\nabla W * \rho)\rho).
\end{align}
Therefore a stationary state satisfies (in the following we will ignore the scaling parameters, $A$ and $B$, as they do not change the computation):
\begin{align*}
    -\Delta \rho = \nabla \cdot ((\nabla W *\rho)\rho)
\end{align*}
which implies that
\begin{align*}
    -\nabla \rho = (\nabla W * \rho)\rho
\end{align*}
and finally that:
\begin{align}\label{eq:0food_condition}
    \nabla W * \rho = \nabla (\text{ln}(\rho)).
\end{align}
\eqref{eq:0food_condition} represents a general condition that zero food stationary states must satisfy.  If we assume that stationary states are Gaussian, i.e. that:
\begin{align*}
    \rho(x) = \beta \exp(-x^{2}\tau)\;\beta,\tau > 0
\end{align*}
then by \eqref{eq:0food_condition} we have that:
\begin{align*}
    \frac{\dif}{\dif x} \Big [ \int_{-\infty}^{\infty} W(x-y)\beta\exp(-y^{2}\tau)\dif y \Big] &= - \frac{\dif}{\dif x}\Big[ \text{ln}(\beta \exp(-x^{2}\tau)\Big] \\
    \implies \int_{-\infty}^{\infty} W'(y)\beta\exp(-(y-x)^{2}\tau)\dif y &= 2x\tau. \\
\end{align*}
The above can only hold if $W'(y) = ay$ for some $a\in \R$.  Therefore, Gaussians are zero food stationary states only if the interaction kernel is quadratic.  Further, by plugging in $ay$ for $W'(y)$ in the above we can see that the following must hold: \begin{align*}
    \beta a = \frac{2\tau^{\frac{3}{2}}}{\sqrt{\pi}}
\end{align*}
So, for a given quadratic kernel there are a family of Gaussians that are potential zero food stationary states (depending on the mass of the initial profile).  For an illustration of this fact see Figure \ref{fig:0food_retract}.  Here, we simulate \eqref{eq:macroscopic_model} in the case of no drift term using the "blob method" for aggregation-diffusion equations introduced in \cite{carrillo_blob_2019}.  We will discuss the numerical method in more detail in the Numerics section of the paper.  We choose an initial condition representative of a slime mold agregated around two food sources at $x = 1$ and $x = -1$.  We find in accordance with our calculation above that the mass profile appears to converge to a Gaussian configuration.  This also illustrates a modeling property of \eqref{eq:macroscopic_model}; in the case that food sources "run out" the profile will "retract" to a central configuration.

Effectively, by starting from information about the zero food stationary state, we have "solved" for the correct kernel.  This analysis suggests an empirical method for determining the correct kernel function from experimental data.  Instead of our assumption that zero food stationary states are Gaussian, statistical analysis of slime mold configurations in the presence of zero food sources could provide information about the stationary state configuration.  This information could then be used to "solve" for the kernel that results in this "correct" stationary state.

\begin{figure}
  \centering
  \includegraphics[scale = 0.6]{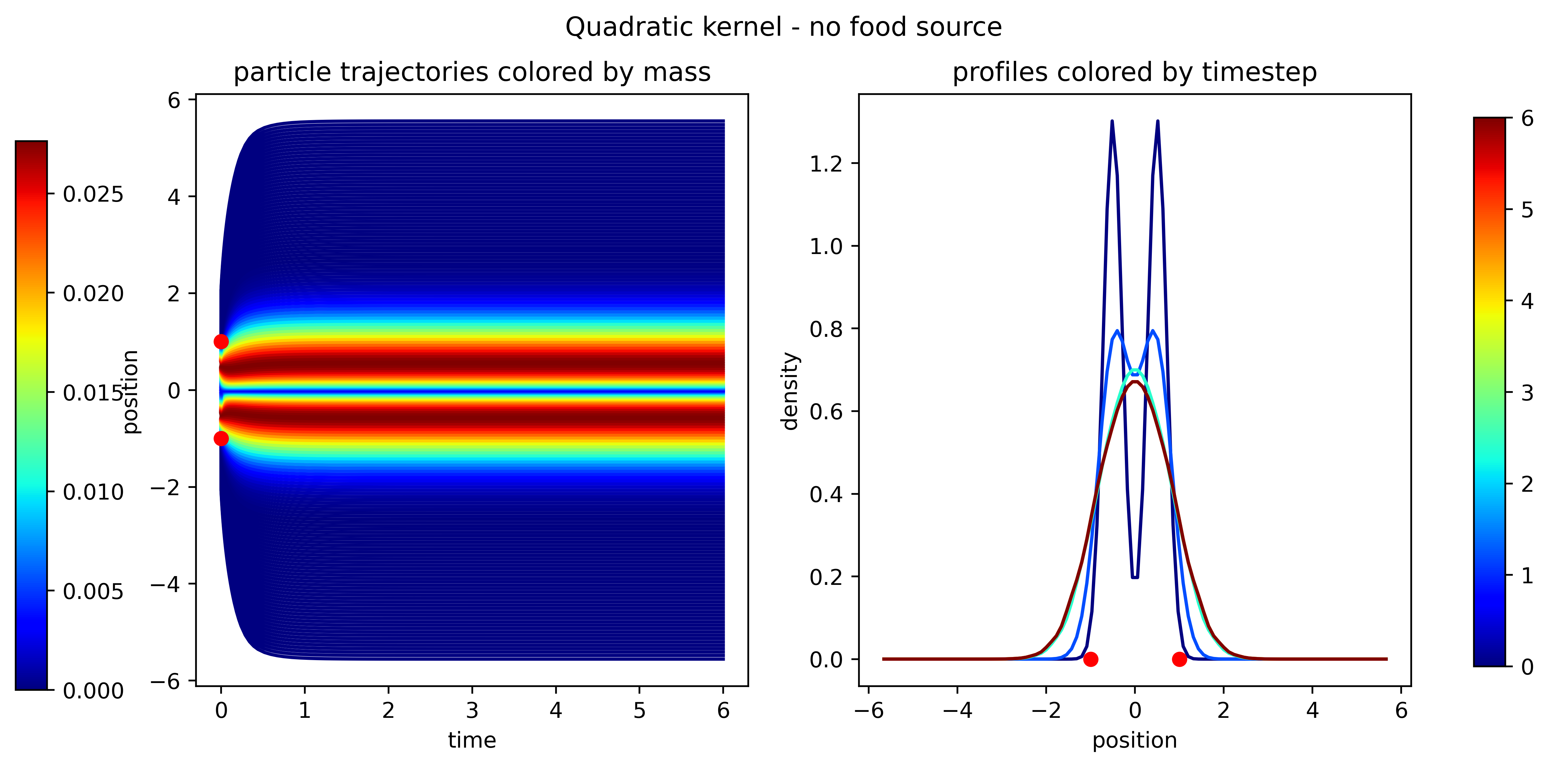}
  \caption{The evolution of \eqref{eq:macroscopic_model} in the case of a quadratic kernel with no drift term.  The initial condition is chosen to represent the state of a slime mold after it has agregated around two food sources at $x =1$ and $x = -1$.  Notice that the profiles appear to converge on a central Gaussian configuration - an illustration of the fact that Gaussians are stationary states of the equation in the case of a quadratic kernel. }
  \label{fig:0food_retract}
\end{figure}

\subsection{High diffusion stationary states}
We now turn to considering stationary states of the scaled equation:
\begin{align}
\rho_{t} = A(\nabla \cdot (\nabla V\rho))  + B(\nabla \cdot((\nabla W * \rho)\rho)) + C(\Delta\rho).
\end{align}
We will see in a later section that the "fair" regime $(A=B=C)$ does not result in food-source aggregation.  Therefore we are motivated to study different scalings of the terms in \eqref{eq:macroscopic_model}.  Here, we will rule out a large regime of scaling parameters by employing a fixed point argument to show that if the diffusion parameter, $C$, is sufficiently larger then the drift and interaction parameters then the only stationary state of \eqref{eq:macroscopic_model} is $0$ on bounded domains.
\begin{theorem}
  \label{thm:diffusion_stationary_states}
  Let $\Omega\subseteq \Rn$ be compact and connected.  For a given $A,B < 0$ and $V,W\ in H^{2}(\Omega)\cap L^{\infty}(\Omega)$ then for any sufficiently large $C > 0$, $0$ is the unique stationary state of:
  \begin{align*}
    \rho_{t}(t,x) &= A(\nabla \cdot (\nabla V\rho))  + B(\nabla \cdot((\nabla W * \rho)\rho)) + C(\Delta\rho^{m}),\quad x\in \Omega \subseteq \Rn, \\
    \rho(0, x) &= \rho_{0}(x)
  \end{align*}
  in $H_{0}^{1}(\Omega)$,
\end{theorem}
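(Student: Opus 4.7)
The plan is to follow the authors' own suggestion and recast the stationary problem as a Banach fixed-point equation, exploiting the fact that the top-order linear term $-C\Delta$ dominates the lower-order drift and interaction terms when $C$ is large. A weak stationary state $\rho\in H_0^1(\Omega)$ satisfies
$$C\int_\Omega \nabla\rho\cdot\nabla\phi\,dx \;=\; -A\int_\Omega \nabla V\,\rho\cdot\nabla\phi\,dx \;-\; B\int_\Omega(\nabla W*\rho)\,\rho\cdot\nabla\phi\,dx$$
for every $\phi\in H_0^1(\Omega)$; I treat the unexplained $\rho^m$ in the displayed equation as $\rho$, matching the macroscopic model derived earlier. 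Define $T:H_0^1(\Omega)\to H_0^1(\Omega)$ by letting $T(\tilde\rho)$ be the Lax--Milgram solution of the same identity with $\tilde\rho$ frozen on the right-hand side. Then stationary states coincide with fixed points of $T$, and $T(0)=0$.

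For invariance I would test the defining identity for $T(\tilde\rho)$ against $\phi=T(\tilde\rho)$ itself, then apply Cauchy--Schwarz, the Poincaré inequality $\|\tilde\rho\|_{L^2}\le C_P\|\nabla\tilde\rho\|_{L^2}$ available on the compact domain, the $L^\infty$ bound on $\nabla V$ provided by the hypotheses, and the convolution estimate $\|\nabla W*\tilde\rho\|_{L^\infty}\le\|\nabla W\|_{L^\infty}\|\tilde\rho\|_{L^1}\le\|\nabla W\|_{L^\infty}|\Omega|^{1/2}\|\tilde\rho\|_{L^2}$. This produces a bound of the form
$$C\,\|\nabla T(\tilde\rho)\|_{L^2}\;\le\;\bigl(c_1|A|+c_2|B|\,\|\nabla\tilde\rho\|_{L^2}\bigr)\,\|\nabla\tilde\rho\|_{L^2},$$
so $T$ maps the closed ball $B_R\subset H_0^1(\Omega)$ into itself whenever $C\ge c_1|A|+c_2|B|R$. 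For contraction I apply the standard bilinear splitting
$$(\nabla W*\tilde\rho_1)\tilde\rho_1-(\nabla W*\tilde\rho_2)\tilde\rho_2 \;=\; \bigl(\nabla W*(\tilde\rho_1-\tilde\rho_2)\bigr)\tilde\rho_1+(\nabla W*\tilde\rho_2)(\tilde\rho_1-\tilde\rho_2),$$
control each summand by the same ingredients, and obtain
$$C\,\|\nabla(T(\tilde\rho_1)-T(\tilde\rho_2))\|_{L^2}\;\le\;\bigl(c_1|A|+2c_2|B|R\bigr)\,\|\nabla(\tilde\rho_1-\tilde\rho_2)\|_{L^2},$$
which is a strict contraction once $C>c_1|A|+2c_2|B|R$. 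Banach's fixed-point theorem then yields a unique fixed point in $B_R$, necessarily $0$.

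The main obstacle, and the point I would flag as the real sticking issue in the statement as written, is that both thresholds on $C$ grow linearly with the radius $R$, so the argument literally excludes non-trivial stationary states only inside $B_R$, not throughout $H_0^1(\Omega)$. To upgrade to the global uniqueness claimed, one needs an a priori bound $\|\nabla\rho\|_{L^2}\le R_*$ for any stationary state, with $R_*$ depending on $A,B,V,W,\Omega$ but independent of $C$. Testing a stationary state against itself in the weak form only gives the reversed inequality $\|\nabla\rho\|_{L^2}\gtrsim C$ for nontrivial $\rho$, which is not enough on its own. A matching upper bound, for instance via the conserved mass $\int\rho=1$ carried over from the original model combined with an $L^\infty$ estimate for $-C\Delta\rho=F(\rho)$ or a De Giorgi-type truncation argument, is where the real work lies, and that is the step on which I would spend most of my effort.
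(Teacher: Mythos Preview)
Your approach is essentially the paper's: recast stationary states as fixed points of a solution operator for the Laplacian with the drift and interaction terms frozen on the right, then run Banach on a ball in $H_0^1$. The only cosmetic differences are that the paper works with the strong form $-\Delta\tilde\rho = G(\rho)$ and invokes an elliptic regularity estimate $\|\tilde\rho\|_{H^2}\le D\|G(\rho)\|_{L^2}$, whereas you stay in the weak formulation and test against $T(\tilde\rho)$; and the paper carries out the contraction estimate in $L^1$ first and then passes to $L^2$ via the bounded-domain embedding, while you estimate in $L^2$ directly using $L^\infty$ bounds on $\nabla V$ and $\nabla W$. The bilinear splitting of the interaction term is identical in both.

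The issue you flag at the end is real, and it is worth knowing that the paper does not resolve it either. The paper works on the ball $H_0^1(\Omega,K)=\{\rho:\|\rho\|_{H_0^1}\le K\}$, obtains a contraction constant $\gamma(A,B,C)$ that in fact also depends on $K$ through terms like $\|\nabla W*\rho_1\|_{L^2}$ and $\|\rho_1\|_{L^1}$, and then asserts that ``since $0$ is trivially a fixed point and $K$ is arbitrary'' uniqueness holds globally. But exactly as you observe, the threshold on $C$ grows with $K$, so this produces, for each $C$, uniqueness only in a ball of some radius $R(C)$, not in all of $H_0^1(\Omega)$. The paper offers no a priori bound on $\|\rho\|_{H_0^1}$ independent of $C$, so your instinct that this is where the real work would lie is correct; you have simply been more honest about the gap than the paper is.
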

\begin{proof}
Recall that $\rho$ is a stationary state of \eqref{eq:macroscopic_model}  if:
\begin{align*}
  -\Delta \rho = \frac{A}{C}[\nabla V \cdot \nabla \rho + \Delta V\rho] + \frac{B}{C}[\nabla(W * \rho)\cdot \nabla\rho + \Delta(W * \rho)\rho].
\end{align*}
  We define:
\begin{align*}
  G(\rho) := \frac{A}{C}[\nabla V \cdot \nabla \rho + \Delta V\rho] + \frac{B}{C}[\nabla(W * \rho)\cdot \nabla\rho + \Delta(W * \rho)\rho].
\end{align*}
Given $\rho \in H_{0}^{1}(\Omega)$ we know that $G(\rho) \in L^{2}(\Omega)$ and that the equation:
\begin{align}\label{eq:stationary_equation}
  \Delta\tilde{\rho} = G(\rho)
\end{align}
has a unique solution in $H_{0}^{1}(\Omega)$.  Given $K > 0$, define $H_{0}^{1}(\Omega, K) := \set{\rho \in H_{0}^{1}(\Omega) : ||\rho||_{H_{0}^{1}} \leq K}$.  It can be shown (see, for example, Theorem 4 in section 6.3 of \cite{evans_partial_2010}) that if $\tilde{\rho}$ is the unique solution of \eqref{eq:stationary_equation} then the following estimate holds for some $D > 0$:
\begin{equation}
  \label{eq:evans}
    ||\tilde{\rho}||_{H^{2}} \leq D||G(\rho)||_{L^{2}}
\end{equation}
Therefore, by applying standard estimates we can see that for a sufficiently large choice of $C$ if $\rho \in H_{0}^{1}(\Omega, K)$ then we must also have that the solution to $\eqref{eq:stationary_equation}$, $\tilde{\rho}$, belongs to $\rho \in H_{0}^{1}(\Omega, K)$ as well.

    Define $\tilde{A}:H_{0}^{1}(\Omega, K)\rightarrow H_{0}^{1}(\Omega,K)$ via:
\begin{align*}
    \tilde{A}[\rho] = \tilde{\rho}\quad\text{where $\tilde{\rho}$ solves \eqref{eq:stationary_equation}}.
\end{align*}
We will show that $\tilde{A}$ is a contraction and thus has a unique fixed point by Banach's fixed point theorem.  Since $0$ is trivially a fixed point of $\tilde{A}$ and $K$ is arbitrary this implies that $0$ is the only fixed point of $\tilde{A}$ and thus the only stationary state of \eqref{eq:macroscopic_model} .  Let $\tilde{\rho}_{1} = \tilde{A}[\rho_{1}]$ and $\tilde{\rho}_{2} = \tilde{A}[\rho_{2}]$.  We will show that there exists a $0<\gamma(A,B,C)<1$ such that:
\begin{align*}
    \Hnorm{A[\rho_{1}] - A[\rho_{2}]} \leq \gamma(A,B,C)\Hnorm{\rho_{1} - \rho_{2}}.
\end{align*}
Since $\Omega$ is bounded by assumption we have by Poincare's inequality that:
\begin{align}\label{eq:poincare}
  \Hnorm{\tilde{\rho_{1}} - \tilde{\rho_{2}}} \leq \alpha \subnorm{D(\tilde{\rho_{1}} - \tilde{\rho_{2}})}{\Lq{2}}
\end{align}
for some $\alpha > 0$.  Therefore since $\tilde{\rho_{1}}$ and $\tilde{\rho_{2}}$ are both solutions of \eqref{eq:stationary_equation} we must have by another application of the Poincare inequality that:
\begin{align*}
  \subnorm{D(\tilde{\rho}_{1} - \tilde{\rho_{2}})}{\Lq{2}}^{2} &\leq |\int_{\Omega}D(\tilde{\rho}_{1} - \tilde{\rho_{2}}) \cdot D(\tilde{\rho}_{1} - \tilde{\rho_{2}}) dx| \\
  &= |-\int_{\Omega}(G(\rho_{1}) - G(\rho_{2}))(\tilde{\rho_{1}} - \tilde{\rho_{2}})dx| \\    &\leq \subnorm{G(\rho_{1}) - G(\rho_{2})}{\Lq{2}}\subnorm{\tilde{\rho}_{1} - \tilde{\rho_{2}}}{\Lq{2}} \\
  &\leq \alpha \subnorm{G(\rho_{1}) - G(\rho_{2})}{\Lq{2}}\subnorm{D(\tilde{\rho}_{1} - \tilde{\rho_{2}})}{\Lq{2}}.
\end{align*}
Therefore we have that:
\begin{align*}
  \subnorm{D(\tilde{\rho}_{1} - \tilde{\rho_{2}})}{\Lq{2}} \leq \alpha \subnorm{G(\rho_{1}) - G(\rho_{2})}{\Lq{2}}
\end{align*}
and by \eqref{eq:poincare} that:
\begin{align}\label{eq:G_condition}
  \Hnorm{\tilde{\rho_{1}} - \tilde{\rho_{2}}} \leq \alpha\subnorm{G(\rho_{1}) - G(\rho_{2})}{\Lq{2}}.
\end{align}
We claim that
\begin{align}\label{eq:contract}
  \subnorm{G(\rho_{1}) - G(\rho_{2})}{\Lq{2}} \leq \gamma(A,B,C)\subnorm{\rho_{1} - \rho_{2}}{\Lq{2}}.
\end{align}
for some $\gamma > 0$ that depends only on $A,B$ and $C$ for a given $W$ and $V$.
Expanding we find that:
\begin{equation}\label{eq:expanded}
  \begin{aligned}
    \subnorm{G(\rho_{1}) - G(\rho_{2})}{\Lq{1}} \leq
    &\frac{A}{C}\subnorm{\nabla V \cdot \nabla(\rho_{1} - \rho_{2}) + \nabla V(\rho_{1} - \rho_{2})}{\Lq{1}} \\
    + &\frac{B}{C}\subnorm{\nabla(W* \rho_{1})\cdot \nabla\rho_{1} - \nabla(W * \rho_{2})\cdot \nabla\rho_{2}}{\Lq{1}}\\
    + &\frac{B}{C}\subnorm{\Delta(W * \rho_{1})\rho_{1} - \Delta(W * \rho_{2})\rho_{2}}{\Lq{1}}.
  \end{aligned}
\end{equation}
We will continue our analysis term by term.  Starting with the first term and applying the Cauchy-Schwarz inequality we find that
\begin{equation}\label{eq:term1}
  \begin{aligned}
    \frac{A}{C}\subnorm{\nabla V \cdot \nabla(\rho_{1} - \rho_{2}) + \nabla V(\rho_{1} - \rho_{2})}{\Lq{1}}&\leq \frac{A}{C}\Big[ \subnorm{\nabla V}{\Lq{2}}\subnorm{\nabla(\rho_{1} - \rho_{2})}{\Lq{2}}+ \subnorm{\Delta V}{\Lq{2}}\subnorm{\rho_{1} - \rho_{2}}{\Lq{2}}\Big]. \\
  \end{aligned}
\end{equation}
Now, turning to the second term of \eqref{eq:expanded} and applying the Cauchy-Schwarz inequality we find:
\begin{equation}\label{eq:term2_intermediate}
  \begin{aligned}
    \frac{B}{C}\subnorm{\nabla(W* \rho_{1})\cdot \nabla\rho_{1} - \nabla(W * \rho_{2})\cdot \nabla\rho_{2}}{\Lq{1}} &\leq \frac{B}{C}\Big[\subnorm{\nabla W * \rho_{1}}{\Lq{2}} \subnorm{\nabla(\rho_{1} - \rho_{2})}{\Lq{2}}\\
    &+ \subnorm{\nabla W * (\rho_{1} - \rho_{2})\cdot \nabla \rho_{2}}{\Lq{1}}\Big ].
  \end{aligned}
\end{equation}
We will further expand the last term of \eqref{eq:term2_intermediate}.  Notice that:
\begin{equation}\label{eq:term2_intermediate2}
  \begin{aligned}
    \frac{B}{C}\subnorm{\nabla W * (\rho_{1} - \rho_{2})\cdot \nabla \rho_{2}}{\Lq{1}} \leq \frac{B}{C}\subnorm{\rho_{1} - \rho_{2}}{\Lq{2}}\sum_{i = 1}^{d}\subnorm{W_{x_{i}}}{\Lq{2}}\subnorm{\rho_{2, x_{i}}}{\Lq{1}}.
  \end{aligned}
\end{equation}
Now, combining \eqref{eq:term2_intermediate} and \eqref{eq:term2_intermediate2} we find
\begin{equation}\label{eq:term2}
  \begin{aligned}
    \frac{B}{C}\subnorm{\nabla(W* \rho_{1})\cdot \nabla\rho_{1} - \nabla(W * \rho_{2})\cdot \nabla\rho_{2}}{\Lq{1}} &\leq \frac{B}{C}\subnorm{\nabla W * \rho_{1}}{\Lq{2}} \subnorm{D(\rho_{1} - \rho_{2})}{\Lq{2}}\\
    &+ \frac{B}{C}\subnorm{\rho_{1} - \rho_{2}}{\Lq{2}}\sum_{i = 1}^{d}\subnorm{W_{x_{i}}}{\Lq{2}}\subnorm{\rho_{2, x_{i}}}{\Lq{1}}. \\
  \end{aligned}
\end{equation}
Finally, we turn to the third term of \eqref{eq:expanded}.  Notice that:
\begin{equation}\label{eq:term3}
  \begin{aligned}
    \frac{B}{C}\subnorm{\Delta(W * \rho_{1})\rho_{1} - \Delta(W * \rho_{2})\rho_{2}}{\Lq{1}} &= \frac{B}{C}\int_{\Omega}|\Delta(W * \rho_{1})\rho_{1} - \Delta(W * \rho_{2})\rho_{2}|dx\\
    &= \frac{B}{C}\int_{\Omega}|\Delta(W * \rho_{1})\rho_{1} - \Delta(W * \rho_{2})\rho_{1} + \Delta(W * \rho_{2})\rho_{1} - \Delta(W * \rho_{2})\rho_{2}| dx\\
    &= \frac{B}{C}\int_{\Omega}|\Delta(W * (\rho_{1} - \rho_{2})\rho_{1}) + \Delta(W * \rho_{2})(\rho_{1} - \rho_{2})|dx\\
    &\leq \frac{B}{C}\subnorm{\Delta W * \rho_{2}}{\Lq{2}}\subnorm{\rho_{1} - \rho_{2}}{\Lq{2}}\\
    &+\frac{B}{C} \subnorm{\Delta W}{\Lq{2}}\subnorm{\rho_{1} - \rho_{2}}{\Lq{2}}\subnorm{\rho_{1}}{\Lq{1}}.\\
  \end{aligned}
\end{equation}
Now, combining our results in \eqref{eq:term1}, \eqref{eq:term2} and \eqref{eq:term3} we find that:
\begin{equation}
  \begin{aligned}
    \subnorm{G(\rho_{1}) - G(\rho_{2})}{\Lq{1}} &\leq \frac{A}{C}\Big[ \subnorm{\nabla V}{\Lq{2}}\subnorm{\rho_{1} - \rho_{2}}{\Lq{2}}+ \subnorm{\Delta V}{\Lq{2}}\subnorm{\rho_{1} - \rho_{2}}{\Lq{2}}\Big]\\
    &+ \frac{B}{C}\Big[\subnorm{\nabla W * \rho_{1}}{\Lq{2}} \subnorm{\rho_{1} - \rho_{2}}{\Lq{2}}+ \subnorm{\rho_{1} - \rho_{2}}{\Lq{2}}\sum_{i = 1}^{n}\subnorm{W_{x_{i}}}{\Lq{2}}\subnorm{\rho_{2, x_{i}}}{\Lq{1}}\Big]\\
    &+ \frac{B}{C}\Big[\subnorm{\Delta W * \rho_{2}}{\Lq{2}}\subnorm{\rho_{1} - \rho_{2}}{\Lq{2}}+ \subnorm{\Delta W}{\Lq{2}}\subnorm{\rho_{1} - \rho_{2}}{\Lq{2}}\subnorm{\rho_{1}}{\Lq{1}}\Big]\\
    &:=\gamma(A,B,C)\subnorm{\rho_{1} - \rho_{2}}{\Lq{2}}
  \end{aligned}
\end{equation}
where $\gamma \geq 0$ and is independent of $\rho_{1}$ and $\rho_{2}$ as we have assumed a uniform bound on their norms (they are members of $H(\Omega, K)$). Therefore, if $W$ and $V$ are given, $\gamma$ can be made arbitrarily small if $C$ is sufficiently larger than $A$ and $B$.  Additionally we have that
\begin{align*}
  \subnorm{G(\rho_{1}) - G(\rho_{2})}{\Lq{2}} &\leq \beta \subnorm{G(\rho_{1}) - G(\rho_{2})}{\Lq{1}}
\end{align*}
for some $\beta > 0$ and therefore
\begin{equation}\label{eq:final_bound}
\begin{aligned}
    \subnorm{G(\rho_{1}) - G(\rho_{2})}{\Lq{2}} &\leq \beta \subnorm{G(\rho_{1}) - G(\rho_{2})}{\Lq{1}}\\
    &\leq \beta\gamma(A,B,C)\subnorm{\rho_{1} - \rho_{2}}{\Lq{2}}.
\end{aligned}
\end{equation}
Combining \eqref{eq:G_condition} and \eqref{eq:final_bound} we obtain:
\begin{align*}
    \Hnorm{\tilde{\rho_{1}} - \tilde{\rho_{2}}} \leq \alpha\subnorm{G(\rho_{1}) - G(\rho_{2})}{\Lq{2}}\leq\alpha\beta\gamma(A,B,C)\Hnorm{\rho_{1} - \rho_{2}}.
\end{align*}
Therefore for $C$ sufficiently larger than $A$ and $B$, the mapping $\tilde{A}$ is a contraction as desired.
\end{proof}

From a modeling point of view, Theorem \ref{thm:diffusion_stationary_states} shows that too strong a contribution from the diffusion term will destroy any of the qualitative behaviors we wish to capture.  We will see this explicitly in our numerical studies of \eqref{eq:macroscopic_model} in the next section.

\section{Numerics - Qualitative features of the macroscopic model}
We now present some numerical investigations of the macroscopic model derived in the previous section.

\begin{align}\label{equation}
\rho_{t} = A(\nabla \cdot (\nabla V\rho))  + B(\nabla \cdot((\nabla W * \rho)\rho)) + C\Delta\rho
\end{align}

These investigations are aimed at replicating qualitative properties of slime mold movement seen in \cite{nakagaki_t_obtaining_2004, nakagaki_maze-solving_2000, nakagaki_path_2001}, specifically aggregation around food sources while still maintaining a "connected" mass.  Recall that here, the function $V$ models the density of chemoattractants dispersed by food sources.  Therefore, $V$ will always take the form of a sum of radially symmetric positive functions where each term is centered on a food source.  We simulate \eqref{equation} via the "blob method for aggregation-diffusion equations" introduced in \cite{carrillo_blob_2019} which approximates \eqref{equation} by solving the ODE:

\begin{equation}\label{ODE}
  \begin{aligned}
    \dot{x}_{i}(t) = &-A(\nabla V(X_{i})) - B(\sum_{j=1}^{N}\nabla W(x_{i}(t) - x_{j}(t))m_{j})\\
    &-C\Bigg( \sum_{j=1}^{N}\Bigg(\Big(\sum_{k=1}^{N}\upvarphi_{\epsilon}(x_{j}-x_{k})m_{k}\Big)+\Big(\sum_{k=1}^{N}\upvarphi_{\epsilon}(x_{i} - x_{k})m_{k}\Big)^{-1} \Bigg)\nabla\upvarphi_{\epsilon}(x_{i} - x_{j})m_{j}\Bigg)
  \end{aligned}
\end{equation}

for a collection of $N$ particles, $(x_{1}(t),...,x_{N}(t), t\geq 0)$, who's initial positions are a regular grid on the domain on which we'd like to approximate \eqref{equation} and who's masses are given by $m_{i} = \rho(0, X_{i}(0))$.  To recover an approximation of $\rho$ from the positions of the particles we convolve the particle solution with a mollifier $\upvarphi_{\epsilon}$:
\begin{align*}
  \bar{\rho}(x,t) = \sum_{i}\upvarphi_{\epsilon}(x-X_{i}(t))m_{i}.
\end{align*}
We assume mollifiers are always of the form:
\begin{align*}
  \upvarphi_{\epsilon}(x) = \frac{1}{(4\pi\epsilon^{2})^{d/2}} e^{-|x|^{2}/4\epsilon^{2}},
\end{align*}
where $d$ is the dimension of the domain.  It is shown in \cite{carrillo_blob_2019} that under some regularity and growth conditions on $W$ and $V$ that $\bar{\rho}$ converges to the solution of \eqref{equation}.  As a first example we will consider \eqref{equation} in one dimension.  We will take the food source to be
\begin{align*}
  V(x) = -e^{-(1-x)^{2}} - e^{-(-1-x)^{2}},
\end{align*}
Therefore we can think of the food sources as being located at $x = -1$ and $x = 1$. For the parameters of the ODE we take:
\begin{itemize}
  \item $N = 100$ particles initially equally space on the interval $[-2.1,2.1]$, i.e. with separation $h = .042$
  \item $\epsilon = h^{.99}$ (Mollifier parameter)
  \item initial mass profile given by $\rho_{0}(x) = \frac{1}{\sqrt{2\pi\sigma^{2}}}e^-\frac{x^{2}}{2\sigma^{2}}$, $\sigma^{2} = 0.0625$ as in \cite{carrillo_blob_2019}.
\end{itemize}
Note that as the variance of the initial profile is very small that, at the particle level, slime mold particles have a very low probability of being found outside of $[-2.1,2.1]$ (initially).  In Figure \ref{fig:1D_fair} we perform a first simulation of \eqref{eq:macroscopic_model} in one dimension via the blob method in the so called "fair regime" ($A=B=C=1$).  Here, we choose a quadratic interaction kernel - specifically we impose that $W'(x) = x$.

\begin{figure}
  \centering
  \includegraphics[scale = .6]{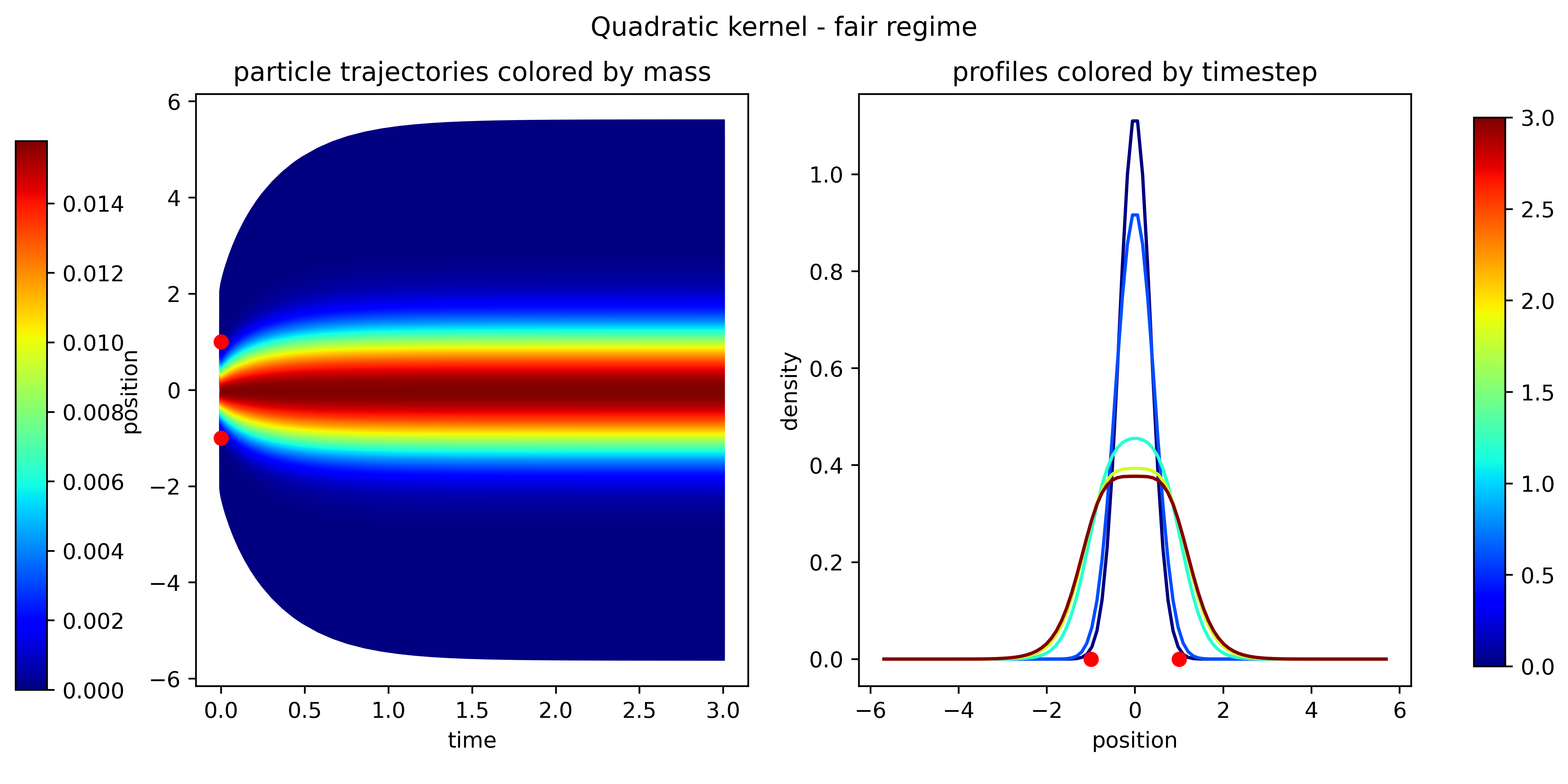}
  \caption{The evolution of the slime mold model \eqref{equation} in the fair regime ($A=B=C$).  Food source aggregation is not observed.}
  \label{fig:1D_fair}
\end{figure}

Qualitatively, we do not observe aggregation of slime mold mass around food sources to the degree present in \cite{nakagaki_t_obtaining_2004}. Therefore, motivated by this example, in the following we will attempt to modify \eqref{equation} in order to better model slime mold food seeking behavior. In Figure \ref{fig:1D_fair} we observe that diffusion appears to be the dominating effect.  Therefore, in order to examine the interplay of the diffusion term against the other two terms in \eqref{equation} we will employ the scaling parameters, $A,B,C >0$, in order to control the contributions from the drift term, interaction term and diffusion term respectively.  Roughly we examine two main regimes; "drift dominated" (where $A > B,C$) and "interaction dominated" (where $B > A, C$).  We also examine so called "competition regimes" where two scaling parameters are equal but dominate the third.  We do not examine the diffusion dominated regime as we've previously seen that if $A=B=C$ (we will refer to this as the "fair regime") then diffusion is the dominating effect.  In each regime we examine three different interaction kernels - quadratic, polynomial and Gaussian.  We choose the same ODE parameters as in the previous fair regime simulation.

\subsection{Food source dominated regime}
We will first examine simulations of \eqref{eq:macroscopic_model} in the regime where the drift term (which models attraction to the food sources) is the dominating factor.  We first examine the case where the kernel is quadratic and given by $W(x) = \frac{x^{2}}{2}$. Here, the kernel is purely attractive.  At the agent based level all agents exert a "pull" on other agents with a force that is proportional to their spacial seperation.  The results can be seen in Figure \ref{fig:drift_quadratic_1D}. Here, we qualitatively observe that in the case where $A=10$, $B=C=1$ that we have strong food source aggregation however, the mass profile splits into two distinct bumps - the strong contribution from the drift term prevents a connected mass from being maintained.  In the other three cases a connected mass is maintained and food source aggregation is observed.  However, in the two cases where $C=5$ the degree of aggregation is relatively weak likely due to the stronger contribution from the diffusion term.

\begin{figure}
    \centering
    \includegraphics[scale=.8]{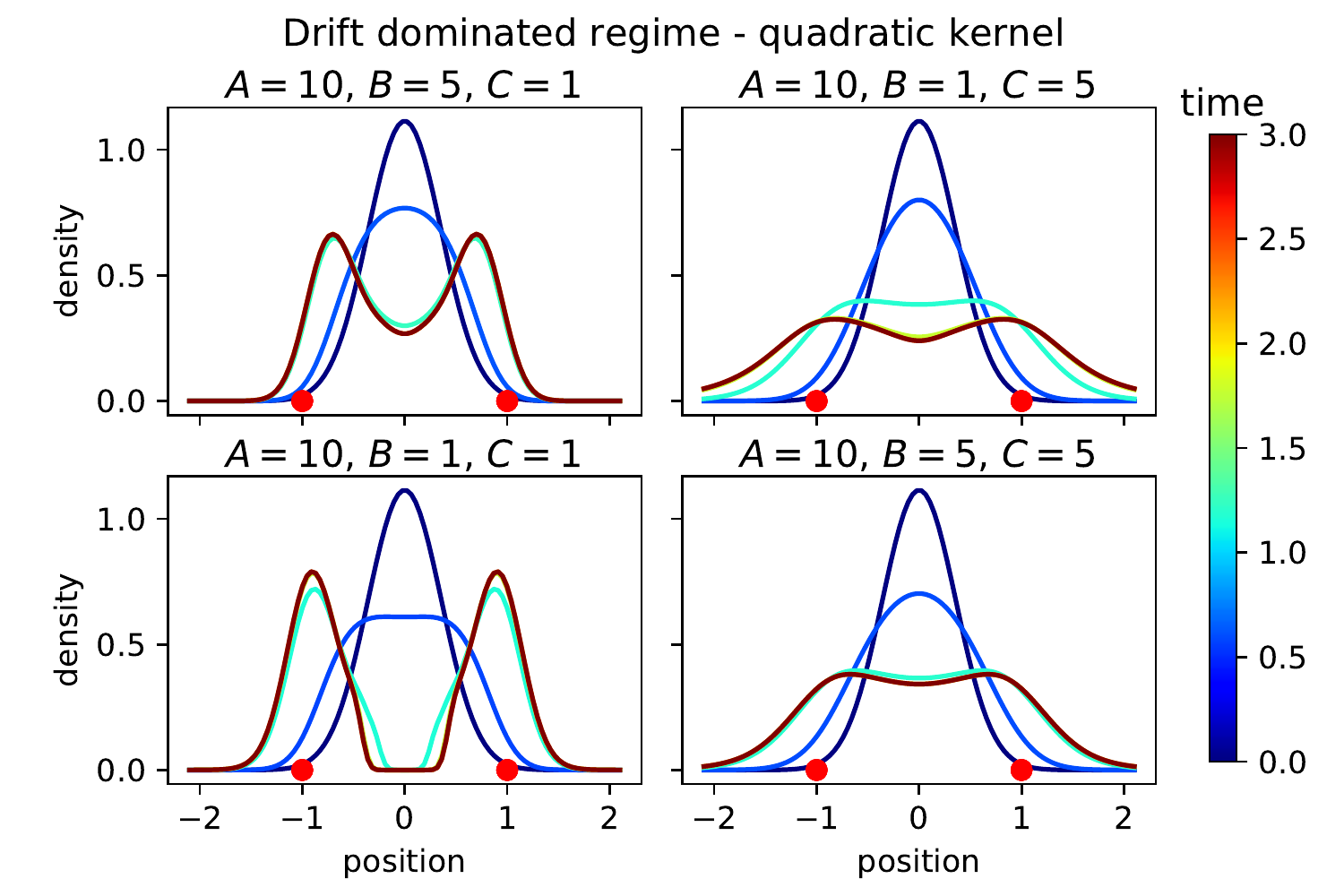}
    \caption{The evolution of \eqref{eq:macroscopic_model}  in the drift dominated regime with kernel given by $W(x)= \frac{x^{2}}{2}$.  Food source aggregation occured in two cases only one of which maintained a connected mass.}
    \label{fig:drift_quadratic_1D}
\end{figure}

%

Next, we examine the case where the kernel is polynomial and given by $W(x) = \frac{x^{4}}{4} - \frac{x^{2}}{2}$.  Intuitively, at the particle level, this means that this kernel is attractive at long-ranges but includes a short-range repulsion to prevent particles from becoming too close.  The results can be seen in Figure \ref{fig:drift_poly_1d}.  Here, we observe results that are qualitatively very similar to the results obtained in the quadratic case.  This is not surprising given that the polynomial kernel is largely attractive and at long ranges the degree of attraction is proportional to the separation between particles - analagous to the quadratic kernel.  However in the case where $A=10, B=5, C=1$ the evolution with a polynomial kernel does not maintain a connected mass in contrast to the evolution witha quadratic kernel.  This is likely due to the short-range repulsion that is present in the polynomial kernel.  At the particle level, particles in the middle of the domain are repelled away from eachother while simultaneously being pulled by a food-source - encouraging a separation.

\begin{figure}
    \centering
    \includegraphics[scale=.8]{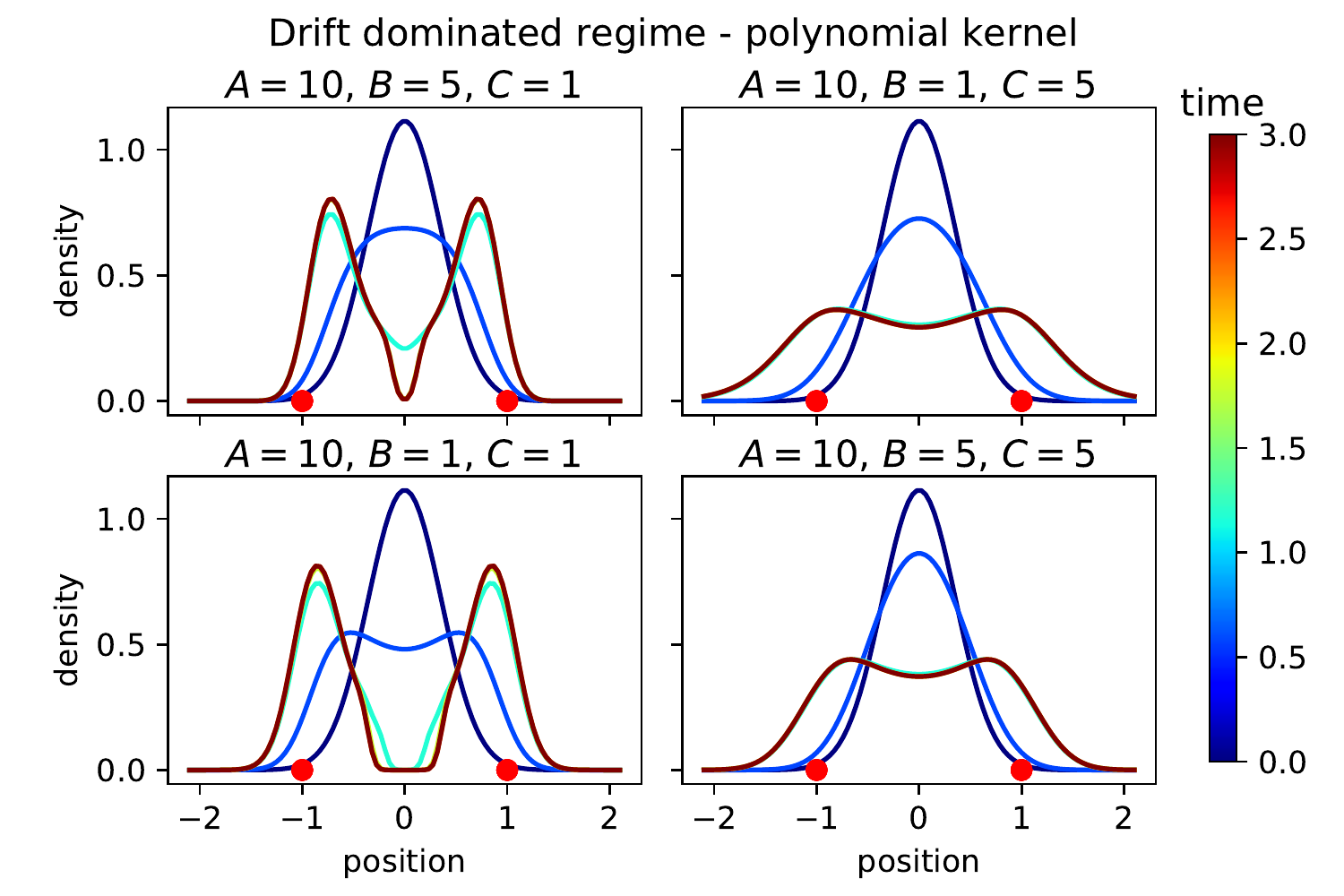}
    \caption{The evolution \eqref{eq:macroscopic_model}  in the drift dominated regime with kernel given by $W(x) = \frac{x^{4}}{4} - \frac{x^{2}}{2}$.  Strong food source aggregation is observed in two cases however neither maintains a connected mass.  Weak food source aggregation is observed in the other cases.}
    \label{fig:drift_poly_1d}
\end{figure}

Finally, we examine the case where the kernel is Gaussian and given by $W(x)=  \upvarphi_{.99}(x) = \frac{1}{(4\pi \cdot .99^{2})^{1/2}} e^{-x^{2}/4\cdot .99^{2}}$.  Here, similar to the quadratic kernel the interaction is purely attractive.  However, unlike the polynomial the strength of attraction between two particles is \textit{not} proportional to their separation.  Instead, attraction becomes stronger (but not in an unbounded fashion) as two particles become closer and weaker as they become farther apart.  This encodes the modeling assumption that interactions should be "local" - particles who are physically distant from eachother should not interact strongly.  The results can be seen in Figure \ref{fig:drift_gaus_1D}.

  Here, we observe food source aggregation in every case.  Similar to the polynomial and wuadratic kernels the aggregation is stronger when the contribution from the diffusion term is small.  However, similar to the polynomial kernel a connected mass is not maintained in either of these cases.  Additionally, in the cases where a connected mass is maintained we still observe a sharp drop off in the middle of the profile - indicating formation of a separation.  We beleive that the tendency to separate in this case is caused by a combination of the strong local interaction and the strong contribution from the food sources; particles that start close to a food source are "trapped" and remain close to the food source as the contribution from the food source and the local interaction reinforce eachother.  Then, particles towards the center of the domain feel a pull towards food sources form both the food source itself and the "trapped" particles.

\begin{figure}
    \centering
    \includegraphics[scale=.8]{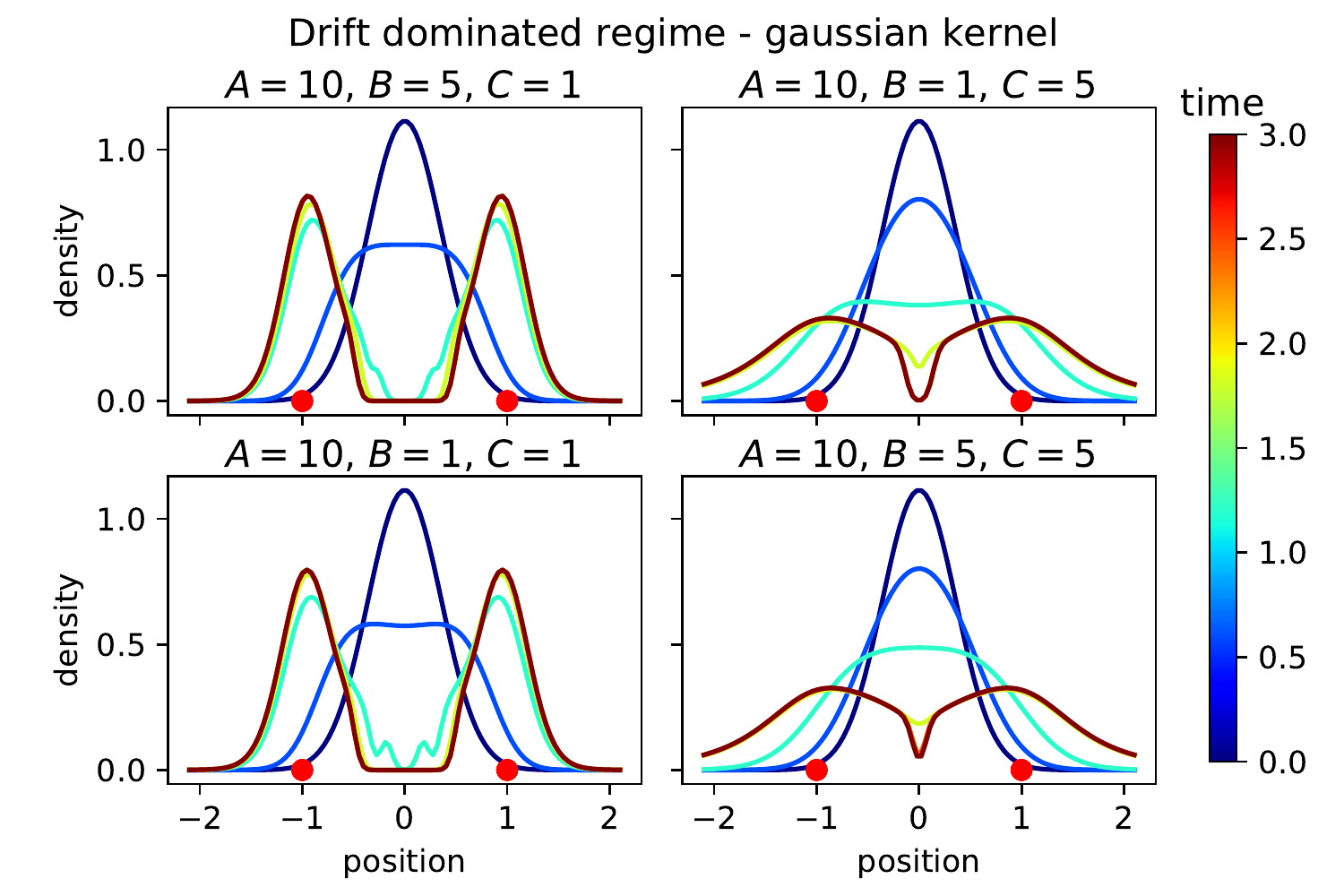}
    \caption{The evolution of \eqref{eq:macroscopic_model}  with kernel given by $W(x)=   \upvarphi_{.99}(x) = \frac{1}{(4\pi \cdot .99^{2})^{1/2}} e^{-|x|^{2}/4\cdot .99^{2}} $.  Food source aggregation occurs in all cases, however in two cases a connected mass is not maintained. }
    \label{fig:drift_gaus_1D}
\end{figure}

%

\subsection{Interaction dominated regime}
We now examine simulations of \eqref{eq:macroscopic_model} in the regime where the interaction term is the strongest contribution.  Again, we start with the case where the kernel function $W$ is quadratic and given by $W(x) = \frac{x^{2}}{2}$.  The results can be seen in Figure \ref{fig:interact_quadratic_1D}.  Here, we do not see any food source aggregation.  This can possibly be explained by the fact that in these simulations the initial profile is Gaussian. We have previously seen that Gaussians are stationary states for \eqref{eq:macroscopic_model} in the case of a quadratic kernel and no drift term.

\begin{figure}
    \centering
    \includegraphics[scale=.8]{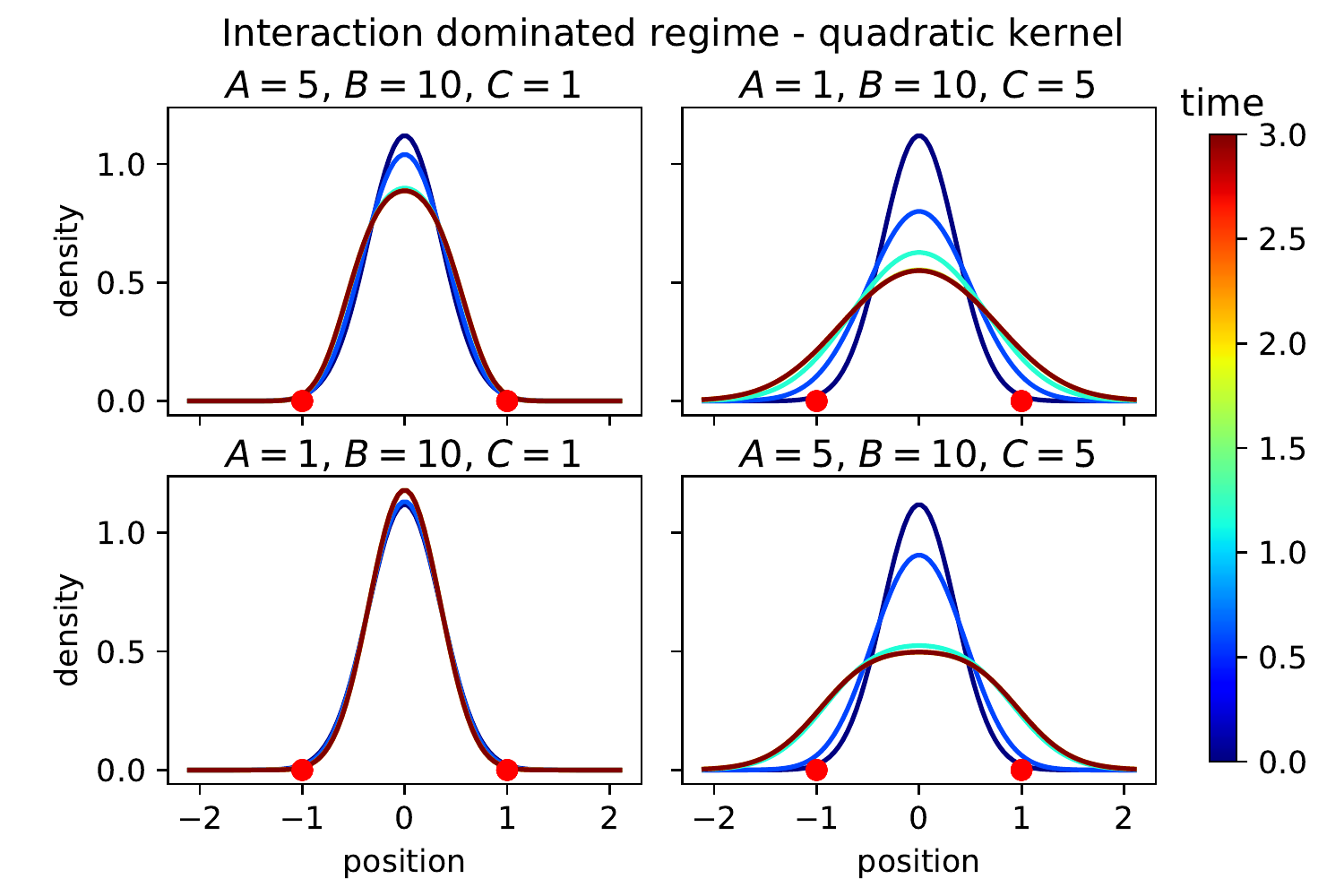}
    \caption{The evolution of \eqref{eq:macroscopic_model}  in the interaction dominated regime with kernel given by $W(x)= \frac{x^{2}}{2}$.  Food source aggregation is not observed.}
    \label{fig:interact_quadratic_1D}
\end{figure}

%
%
Next we examine the case where the kernel function is polynomial, of the same form as in the food source dominated regime.  The results can be seen in Figure \ref{fig:interact_poly_1d}.  Here, analagous to the food source dominated regime we see more pronounced aggregation in the cases where the diffusion term has the weakest contribution.  However the aggregation is less pronounced than the respective examples in the food source dominated regime due to the stronger contribution from the interaction term.  Unlike the analagous cases in the drift dominated regime we do not observe food source aggregation in the cases where $C = 5$, this is possibly due to the stronger contribution from the short range repulsion in the interaction term.

\begin{figure}
    \centering
    \includegraphics[scale=.8]{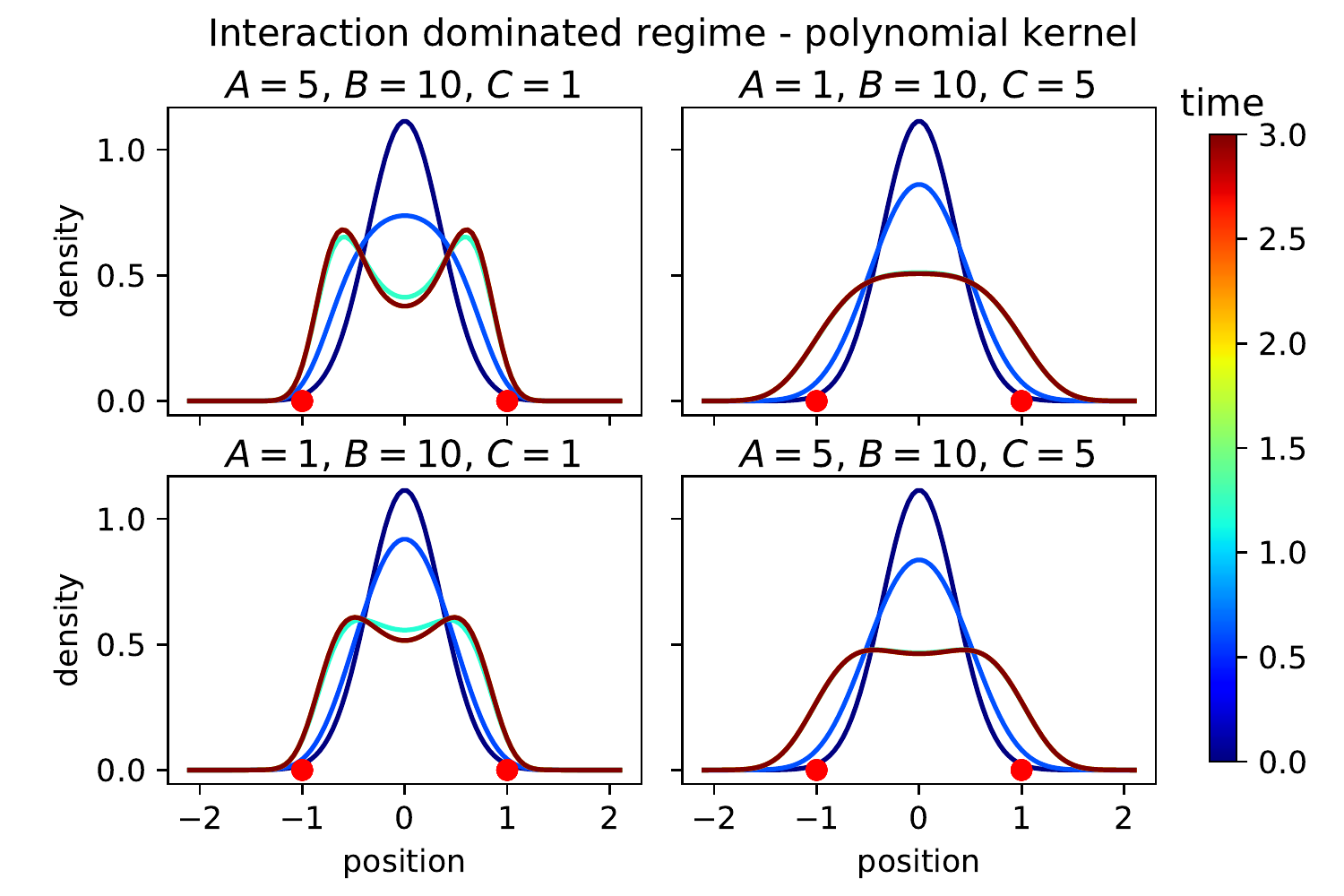}
    \caption{The evolution \eqref{eq:macroscopic_model}  in the interaction dominated regime with kernel given by $W(x) = \frac{x^{4}}{4} - \frac{x^{2}}{2}$.  Food source aggregation while maintaining a connected mass is observed in two cases.}
    \label{fig:interact_poly_1d}
\end{figure}

%
Finally, we examine the evolution in the case of a Gaussian kernel function.  The results can be seen in Figure \ref{fig:interact_gaus_1D}.  Here, in contrast to the respective cases in the drift dominated regime, we do not observe food source aggregation in the cases where the diffusion term has the second largest contribution.  Addtionally, unlike the analagous cases in the drift dominated regime and the polynomial kernel in the interaction dominated regime, we do not observe aggregation in the case where $B=10, C=1, A=1$.  We do observe aggregation in the case where $A=5, B=10, C=1$ however similar to the polynomial kernel the effect is less pronounced than in the analagous case in the drift dominated regime.  Both of these effects are consistent with a stronger contribution from the attractive kernel.
\begin{figure}
    \centering
    \includegraphics[scale=.8]{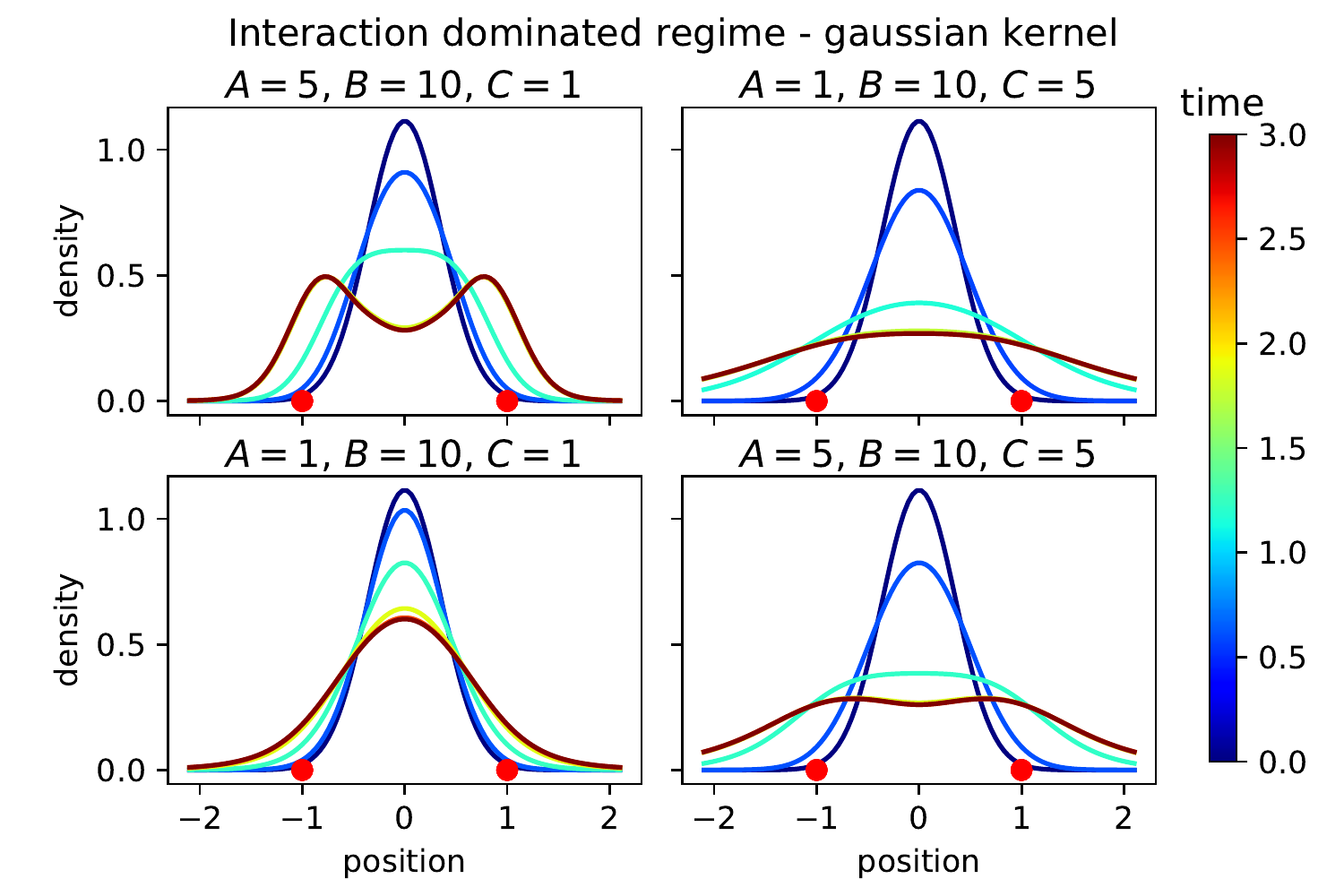}
    \caption{The evolution \eqref{eq:macroscopic_model}  in the interaction dominated regime with kernel given by $W(x) = \upvarphi_{.99}(x) = \frac{1}{(4\pi \cdot .99^{2})^{1/2}} e^{-|x|^{2}/4\cdot .99^{2}}$.  Food source aggregation while maintaining a connected mass is observed in one case.}
    \label{fig:interact_gaus_1D}
\end{figure}
%
%

\subsection{Competition regime}
We now simulate cases where two parameters dominate in competition. The results can be seen in Figure \ref{fig:quad_comp}, Figure \ref{fig:poly_comp}, and Figure \ref{fig:gaus_comp}.  Interestingly, across all three interaction kernels, pronounced food source aggregation is only observed in cases where $A=B=10$.  All of these cases except one exhibit the qualitative features we wish to capture as they do maintain a connected mass.  All other cases do not result in food source aggregation.  This can be explained by the fact that in all of these cases the diffusion term has a significant contribution.  The observation that strong contributions from the diffusion term dampen food source agregation is a general pattern that we have also observed in the other regimes and is additionally supported by the statement of Theorem \ref{thm:diffusion_stationary_states}.

  The cases in the competition regime where we do see food source aggregation also reinforce another pattern observed across all regimes; examples that have the qualitative features we wish to capture have strong contributions from the drift term, the interaction term, or both.  However, we also find that cases in which a connected mass was not maintained only occured where the drift term was dominant.  Likewise, cases in which food source aggregation did not occur in the absence of strong diffusion only occured in the interaction dominated regime.  These observations suggest a general modeling principal; contributions from the drift and diffusion terms should be scaled similarly and larger than the contribution from the diffusion term.

\begin{figure}
    \centering
    \includegraphics[scale = .6]{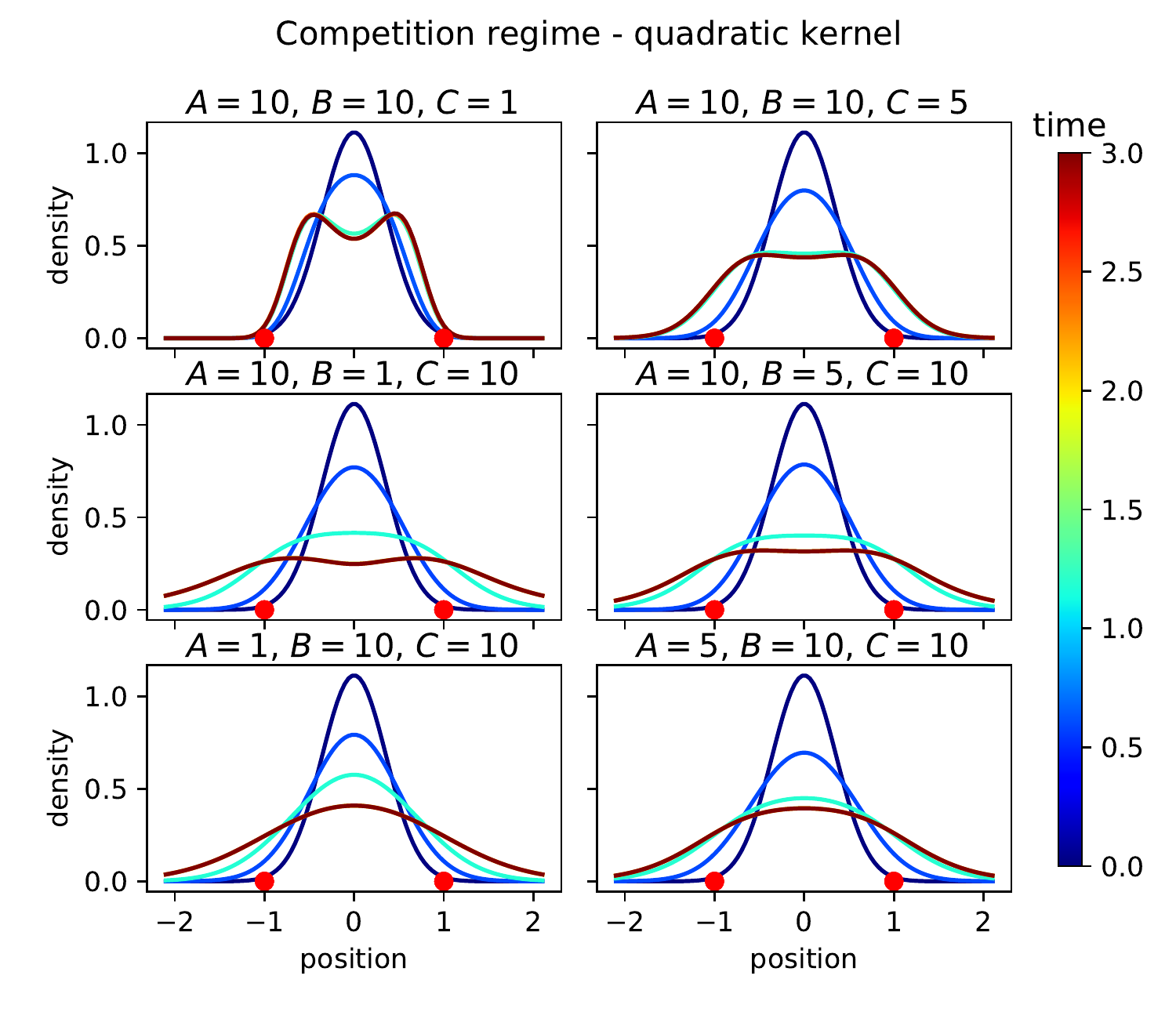}
    \caption{The evolution of \eqref{eq:macroscopic_model}  in the competition regime with kernel given by $W(x)= \frac{x^{2}}{2}$.  Food source aggregation is only observed in cases with a week diffusion contribution.}
    \label{fig:quad_comp}
\end{figure}

\begin{figure}
    \centering
    \includegraphics[scale = .6]{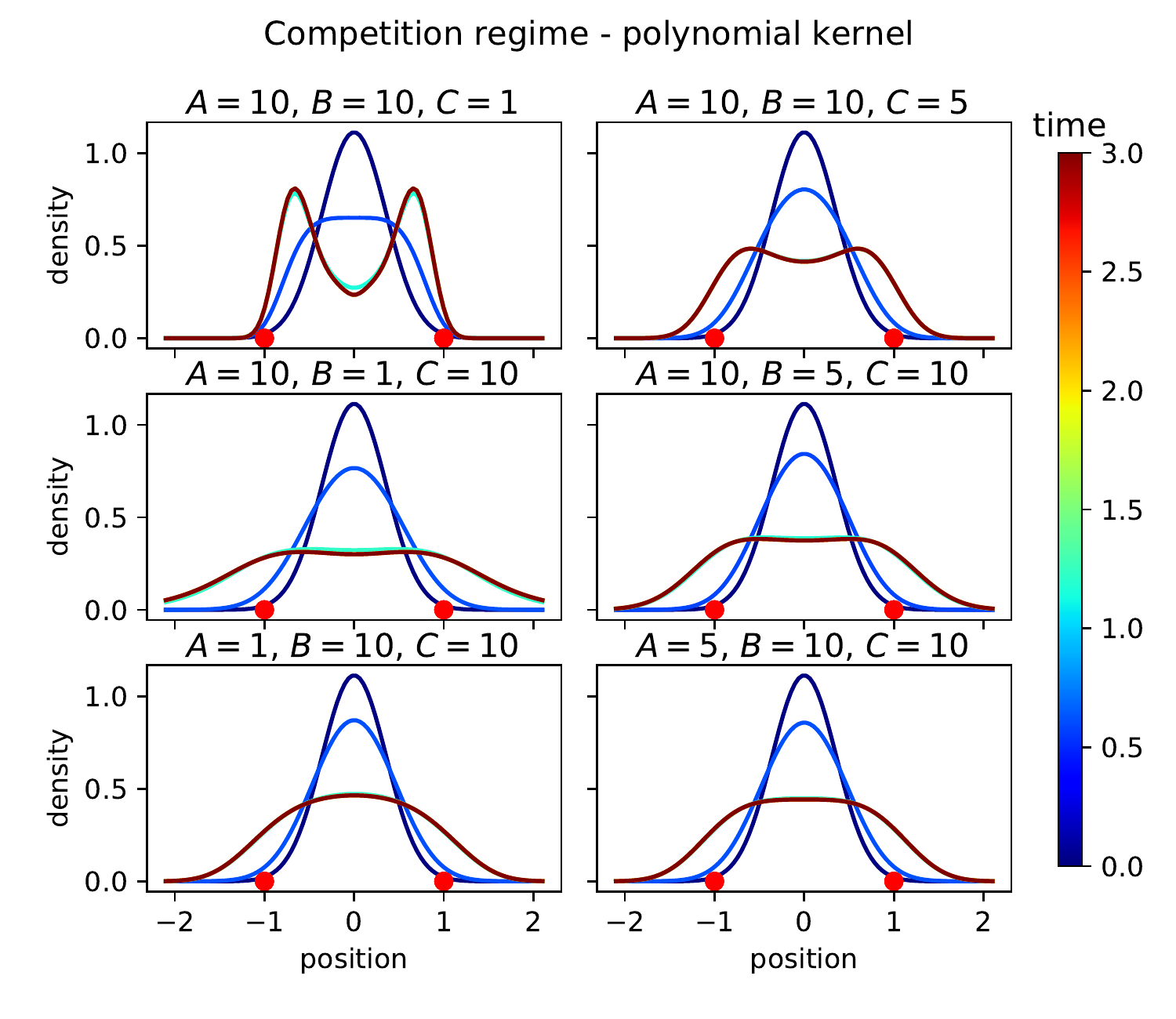}
    \caption{The evolution of \eqref{eq:macroscopic_model}  in the competition regime with kernel given by $W(x) = \frac{x^{4}}{4} - \frac{x^{2}}{2}$.  Food source aggregation is only observed in cases with a weak diffusion contribution.}
    \label{fig:poly_comp}
\end{figure}

\begin{figure}
    \centering
    \includegraphics[scale = .6]{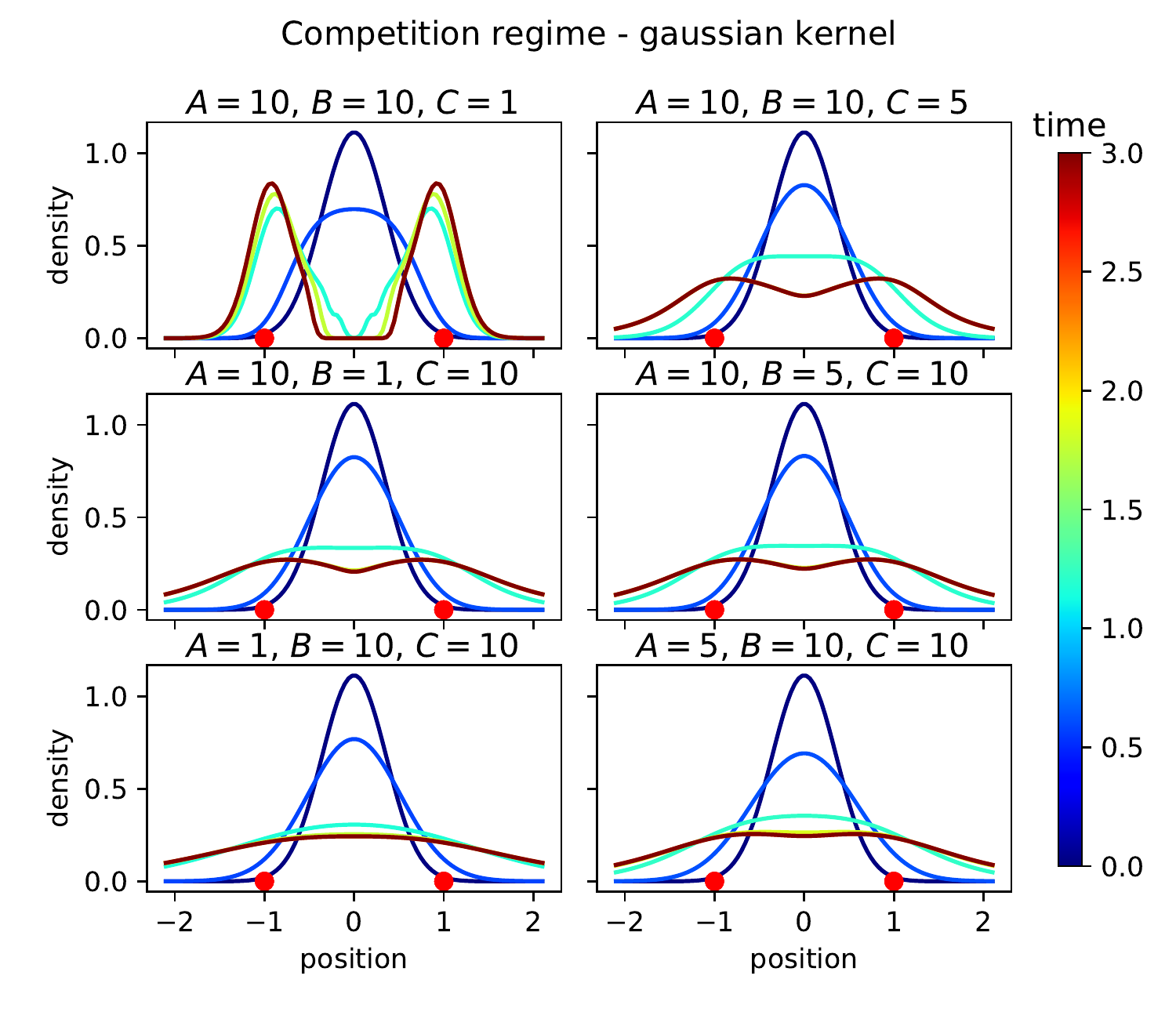}
    \caption{The evolution of \eqref{eq:macroscopic_model} in the competition regime with kernel given by $W(x) =  \frac{1}{(4\pi \cdot .99^{2})^{1/2}} e^{-|x|^{2}/4\cdot .99^{2}}$.  Food source aggregation is only observed in cases with a weak diffusion contribution.}
    \label{fig:gaus_comp}
\end{figure}



\section{Conclusion}
In this manuscript we presented a model hierarchy aimed at modeling the food seeking behavior of the slime mold Physarum Polycephalum.  We were principally interested in capturing the slime mold's ability to aggregate around disparate food sources while maintaining a connected mass.  We first presented a particle based model which includes three main features - a drift term to model a gradient of chemoattractant produced by food sources, an interaction term to model the slime mold's propensity to maintain a connected mass, and a diffusion term to model slime mold foraging behavior.  Additionally, we included scaling parameters for each of these three terms in order to study how varying their contributions could aid in modeling Physarum.  However simulating enough particles to realistically model the evolution of slime mold would be computationally intractable.  Therefore, under the assumption of propagation of chaos we showed that in the large particle limit the evolution of the particle model can be described by a macroscopic aggregation-diffusion equation which can be efficiently simulated.  Before embarking on simulating the equation we made some analytical observations via analysis of the equation's stationary states.  First, we found that in the case of no drift term that an assumption of Gaussian stationary states implies that the interaction kernel must be quadratic and discussed how this could be used as a heuristic to determine a kernel that realistically models physarum behavior.  Then, we showed that if the contribution from the diffusion term is sufficiently larger than the interaction and drift terms that the only possible stationary state is 0, deeming this regime as unsuitable for modeling Physarum.

We then ran a series of simulations of the macroscopic model in order to investigate how to scale the three contributing terms and choose the interaction kernel in order to model Physarum food seeking behavior.  We simulated three different parameter regimes; a regime in which the drift term dominated, a regime in which the interaction term dominated and a regime in which the dominating terms were in competition.  In each regime we examined three different interaction kernels, an attractive kernel with a local interaction, an attractive kernel with long range interaction, and a kernel that included a short range repulsion.  We observed cases that reproduced connected mass food source aggregation across all three regimes with all kernels however several general patterns did emerge.  We found that strong contributions from the diffusion term tend to dampen food source aggregation even in the presence of strong contributions from the other terms.  Additionally, we found that in the absence of strong diffusion, dominance of the drift term can result in mass separation while dominance of the interaction term can prevent food aggregation.  This suggests a general modeling strategy of scaling contributions from the drift and interaction terms similarly and larger than the diffusion term.

There are many ways in which this study could be extended.  Aggregation-diffusion equations have been widely studied in the case of no drift term \cite{chen_modeling_2018}.  A main finding is that there is a critical mass above which the interaction term becomes dominant and causes a "blow up" in the mass profile.  From the particle view this can be thought of intuitvely as all particles converging on one point.  Below the critical mass, the tendency of the diffusion term to cause the mass profile to spread prevents a blow up from occuring.  We did not observe this phenomena in any of our simulations (even in the interaction dominated regime).  It would be interesting to consider whether the presence of a drift term can prevent blow up from occuring at all or mitigate it in the sense that the critical mass at which blow up occurs becomes higher.  Additionally, as we did observe the desired qualitative behavior in several cases it would be useful to compare the evolution of our model to data from actual slime mold growth in a simple situation such as the two food source case we consider.  This would be an important first step in the quantitative validation of this model.

\section*{Acknowledgments}
Part of this work was carried out while D. Weber was visiting the University of Mannheim within the program IPID4all funded by the German Academic Exchange Service (DAAD). This work was supported by the DAAD project {\em Stochastic dynamics for complex networks and systems} (Project-ID 57444394).

\section*{Conflict of interest}
All authors declare no conflicts of interest in this paper.

\bibliography{bib}

\begin{thebibliography}{10}

\bibitem{adamatzky_developing_2009}
{\sc A.~Adamatzky}, {\em Developing proximity graphs by physarum polycephalum:
  does the plasmodium follow the toussaint hierarchy?}, Parallel Processing
  Letters, 19 (2009), pp.~105--127.
\newblock Publisher: World Scientific Publishing Co.

\bibitem{adamatzky_reaction-diffusion_2009}
\leavevmode\vrule height 2pt depth -1.6pt width 23pt, {\em From
  reaction-diffusion to {Physarum} computing}, Natural Computing, 8 (2009),
  pp.~431--447.

\bibitem{adamatzky_slime_2012}
{\sc A.~Adamatzky}, {\em Slime {Mold} {Solves} {Maze} in {One} {Pass},
  {Assisted} by {Gradient} of {Chemo}-{Attractants}}, IEEE Transactions on
  NanoBioscience, 11 (2012), pp.~131--134.

\bibitem{adamatzky_brazilian_2011}
{\sc A.~Adamatzky and P.~P. de~Oliveira}, {\em Brazilian highways from slime
  mold's point of view}, Kybernetes, 40 (2011), pp.~1373--1394.
\newblock Publisher: Emerald Group Publishing Limited.

\bibitem{baumgarten_plasmodial_2010}
{\sc W.~Baumgarten, T.~Ueda, and M.~J.~B. Hauser}, {\em Plasmodial vein
  networks of the slime mold {Physarum} polycephalum form regular graphs},
  Physical Review E, 82 (2010), p.~046113.
\newblock Publisher: American Physical Society.

\bibitem{bonifaci_physarum_2012}
{\sc V.~Bonifaci, K.~Mehlhorn, and G.~Varma}, {\em Physarum can compute
  shortest paths}, Journal of Theoretical Biology, 309 (2012), pp.~121--133.

\bibitem{burchett_revealing_2020}
{\sc J.~N. Burchett, O.~Elek, N.~Tejos, J.~X. Prochaska, T.~M. Tripp,
  R.~Bordoloi, and A.~G. Forbes}, {\em Revealing the {Dark} {Threads} of the
  {Cosmic} {Web}}, The Astrophysical Journal, 891 (2020), p.~L35.
\newblock Publisher: American Astronomical Society.

\bibitem{carrillo_blob_2019}
{\sc J.~A. Carrillo, K.~Craig, and F.~S. Patacchini}, {\em A blob method for
  diffusion}, Calculus of Variations and Partial Differential Equations, 58
  (2019), p.~53.

\bibitem{chen_modeling_2018}
{\sc L.~Chen, S.~Göttlich, and S.~Knapp}, {\em Modeling of a diffusion with
  aggregation: rigorous derivation and numerical simulation}, ESAIM:
  Mathematical Modelling and Numerical Analysis, 52 (2018), pp.~567--593.
\newblock Number: 2 Publisher: EDP Sciences.

\bibitem{chen_mean_2017}
{\sc L.~Chen, S.~Göttlich, and Q.~Yin}, {\em Mean {Field} {Limit} and
  {Propagation} of {Chaos} for a {Pedestrian} {Flow} {Model}}, Journal of
  Statistical Physics, 166 (2017), pp.~211--229.

\bibitem{evans_partial_2010}
{\sc L.~C. Evans}, {\em Partial {Differential} {Equations}}, American
  Mathematical Soc., 2010.
\newblock Google-Books-ID: Xnu0o\_EJrCQC.

\bibitem{gunji_minimal_2008}
{\sc Y.-P. Gunji, T.~Shirakawa, T.~Niizato, and T.~Haruna}, {\em Minimal model
  of a cell connecting amoebic motion and adaptive transport networks}, Journal
  of Theoretical Biology, 253 (2008), pp.~659--667.

\bibitem{jones_approximating_2009}
{\sc J.~Jones}, {\em Approximating the {Behaviours} of {Physarum} polycephalum
  for the {Construction} and {Minimisation} of {Synthetic} {Transport}
  {Networks}}, in Unconventional {Computation}, C.~S. Calude, J.~F. Costa,
  N.~Dershowitz, E.~Freire, and G.~Rozenberg, eds., Lecture {Notes} in
  {Computer} {Science}, Berlin, Heidelberg, 2009, Springer, pp.~191--208.

\bibitem{jones_characteristics_2010}
\leavevmode\vrule height 2pt depth -1.6pt width 23pt, {\em Characteristics of
  {Pattern} {Formation} and {Evolution} in {Approximations} of {Physarum}
  {Transport} {Networks}}, Artificial Life, 16 (2010), pp.~127--153.

\bibitem{jones_influences_2011}
\leavevmode\vrule height 2pt depth -1.6pt width 23pt, {\em Influences on the
  formation and evolution of {Physarum} polycephalum inspired emergent
  transport networks}, Natural Computing, 10 (2011), pp.~1345--1369.

\bibitem{latty_speedaccuracy_2011}
{\sc T.~Latty and M.~Beekman}, {\em Speed–accuracy trade-offs during foraging
  decisions in the acellular slime mould {Physarum} polycephalum}, Proceedings
  of the Royal Society B: Biological Sciences, 278 (2011), pp.~539--545.
\newblock Publisher: Royal Society.

\bibitem{liu_new_2017}
{\sc Y.~Liu, C.~Gao, Z.~Zhang, Y.~Wu, M.~Liang, L.~Tao, and Y.~Lu}, {\em A new
  multi-agent system to simulate the foraging behaviors of {Physarum}}, Natural
  Computing, 16 (2017), pp.~15--29.

\bibitem{liu_physarum_2013}
{\sc Y.~Liu, Z.~Zhang, C.~Gao, Y.~Wu, and T.~Qian}, {\em A {Physarum} {Network}
  {Evolution} {Model} {Based} on {IBTM}}, in Advances in {Swarm}
  {Intelligence}, Y.~Tan, Y.~Shi, and H.~Mo, eds., Lecture {Notes} in
  {Computer} {Science}, Springer Berlin Heidelberg, 2013, pp.~19--26.

\bibitem{nakagaki_smart_2001}
{\sc T.~Nakagaki}, {\em Smart behavior of true slime mold in a labyrinth},
  Research in Microbiology, 152 (2001), pp.~767--770.

\bibitem{nakagaki_obtaining_2004}
{\sc T.~Nakagaki, R.~Kobayashi, Y.~Nishiura, and T.~Ueda}, {\em Obtaining
  multiple separate food sources: behavioural intelligence in the {Physarum}
  plasmodium}, Proceedings of the Royal Society of London. Series B: Biological
  Sciences, 271 (2004), pp.~2305--2310.
\newblock Publisher: Royal Society.

\bibitem{nakagaki_maze-solving_2000}
{\sc T.~Nakagaki, H.~Yamada, and A.~Toth}, {\em Maze-solving by an amoeboid
  organism}, Nature, 407 (2000), pp.~470--470.
\newblock Number: 6803 Publisher: Nature Publishing Group.

\bibitem{nakagaki_path_2001}
\leavevmode\vrule height 2pt depth -1.6pt width 23pt, {\em Path finding by tube
  morphogenesis in an amoeboid organism}, Biophysical Chemistry, 92 (2001),
  pp.~47--52.

\bibitem{nakagaki_t_obtaining_2004}
{\sc {Nakagaki T.}, {Kobayashi R.}, {Nishiura Y.}, and {Ueda T.}}, {\em
  Obtaining multiple separate food sources: behavioural intelligence in the
  {Physarum} plasmodium}, Proceedings of the Royal Society of London. Series B:
  Biological Sciences, 271 (2004), pp.~2305--2310.

\bibitem{oettmeier_physarum_2017}
{\sc C.~Oettmeier, K.~Brix, and H.-G. Döbereiner}, {\em Physarum
  polycephalum—a new take on a classic model system}, Journal of Physics D:
  Applied Physics, 50 (2017), p.~413001.
\newblock Publisher: IOP Publishing.

\bibitem{tero_physarum_2006}
{\sc A.~Tero, R.~Kobayashi, and T.~Nakagaki}, {\em Physarum solver: {A}
  biologically inspired method of road-network navigation}, Physica A:
  Statistical Mechanics and its Applications, 363 (2006), pp.~115--119.

\bibitem{tero_mathematical_2007}
\leavevmode\vrule height 2pt depth -1.6pt width 23pt, {\em A mathematical model
  for adaptive transport network in path finding by true slime mold}, Journal
  of Theoretical Biology, 244 (2007), pp.~553--564.

\bibitem{tero_flow-network_2008}
{\sc A.~Tero, K.~Yumiki, R.~Kobayashi, T.~Saigusa, and T.~Nakagaki}, {\em
  Flow-network adaptation in {Physarum} amoebae}, Theory in Biosciences, 127
  (2008), pp.~89--94.

\bibitem{tsompanas_modeling_2012}
{\sc M.-A.~I. Tsompanas and G.~C. Sirakoulis}, {\em Modeling and hardware
  implementation of an amoeba-like cellular automaton}, Bioinspiration \&
  Biomimetics, 7 (2012), p.~036013.

\bibitem{wu_new_2015}
{\sc Y.~Wu, Z.~Zhang, Y.~Deng, H.~Zhou, and T.~Qian}, {\em A new model to
  imitate the foraging behavior of {Physarum} polycephalum on a nutrient-poor
  substrate}, Neurocomputing, 148 (2015), pp.~63--69.

\bibitem{oksendal_stochastic_2003}
{\sc B.~Øksendal}, {\em Stochastic {Differential} {Equations}}, in Stochastic
  {Differential} {Equations}: {An} {Introduction} with {Applications},
  B.~Øksendal, ed., Universitext, Springer, Berlin, Heidelberg, 2003,
  pp.~65--84.

\end{thebibliography}
\bibliographystyle{siam}

\end{document}